\theoremstyle{plain}
\newtheorem{theorem}{Theorem}
\newtheorem{proposition}[theorem]{Proposition}
\newtheorem{lemma}[theorem]{Lemma}
\theoremstyle{definition}
\theoremstyle{remark}
\title{Out-of-Distribution Detection using Maximum Entropy Coding}
\author{Mojtaba Abolfazli, Mohammad Zaeri Amirani, Anders H{\o}st-Madsen, June Zhang, Andras Bratincsak\thanks{M. Abolfazli, M. 
Zaeri Amirani, A. H{\o}st-Madsen, and J. Zhang are with the Department of Electrical and Computer Engineering, University of Hawaii at Manoa, 2540 Dole Street, Honolulu, HI 96822; e-mail: \{mojtaba,ahm,zjz\}@hawaii.edu. 

A. Bratincsak is with the Department of Pediatrics, John A. Burns School of Medicine, University of Hawaii, Honolulu, HI 96813; e-mail: andrasb@hphmg.org. 

The research was funded in part by the NSF grant CCF-1908957  and NIH grant 1202518S0.}}
\begin{document}
\maketitle



\begin{abstract}
Given a default distribution $P$ and a set of test data
$x^M=\{x_1,x_2,\ldots,x_M\}$ this paper seeks to answer
the question if it was likely that $x^M$ was generated by $P$.
For discrete distributions, the definitive answer is in principle given
by Kolmogorov-Martin-L\"{o}f randomness. In this paper we seek to
generalize this to continuous distributions. We consider a set
of statistics $T_1(x^M),T_2(x^M),\ldots$. To each statistic we
associate its maximum entropy distribution and with this
a universal source coder. The maximum entropy distributions are
subsequently combined to give a total codelength, which is compared
with $-\log P(x^M)$. We show that this approach satisfied a number
of theoretical properties.

For real world data $P$ usually is unknown. We transform data into
a standard distribution in the latent space using a bidirectional
generate network and use maximum entropy coding there. 
We compare the resulting method
 to other methods that also used generative neural networks to detect anomalies. In most cases, our results show better performance.
\end{abstract}


\section{Introduction}
We consider the following problem. Given a \emph{default} distribution
$P$ (which could be continuous or discrete), and
a set of  test data $x^M=\{x_1,x_2,\ldots,x_M\}$ (which for now
does not have to be IID), we
would like to determine if it is likely that $x^M$ was
generated by $P$ or by another distribution. For (binary) discrete data,
this problem was solved theoretically by Kolmogorov and Martin-L\"{o}f
through Kolmogorov complexity \cite{LiVitanyi}. The starting point is what
is called a P-test, which can be thought of as testing
a specific data statistic (e.g., is the mean the correct
one according to $P$).
There are many such statistics, and a universal test is one
that includes all statistics. Martin-L\"{o}f showed that
a universal (sum) P-test is given by $K(x^M|M)<M$ when $P$ is uniform, 
where $K$ is Kolmogorov complexity. Replacing Kolmogorov
complexity (which is uncomputable) with a universal source coder,
this was used to develop Atypicality in \cite{HostSabetiWalton15}.

In this paper we consider the following specific problem.
We start with a
continuous distribution $P$ over $\mathbb{R}^n$ and IID test
data $\mathbf{x}^M = \{\mathbf{x}_i \in \mathbb{R}^n\}, i = 1, \ldots, M >1$. The question is if $\mathbf{x}^M$ is likely to
have been generated by $P$.
This problem is known as out-of-distribution (OOD) detection and is also called group anomaly
detection (GAD). 
For this setup, Kolmogorov complexity cannot be directly applied.
Our aim in this paper is
to still use the principles of Martin-L\"{o}f
randomness \cite{LiVitanyi} to develop principled methods. 
We base this on statistics of the data, to which we associate
maximum entropy distributions, which can in turn be used for coding.


\subsection{Related Work}
\label{sec:RelatedWork}
There have been a number of works on OOD or GAD, and we will
just discuss a few.
In statistics there are for example the Kolmogorov-Smirnoff (KS) test \cite{corder2014nonparametric} and the Pearson $\chi^2$ test \cite{lehmann1986testing}.

In machine learning
  one-class learning has been used \cite{chandola2009anomaly, muandet2013one}.
Another approach is using probabilistic generative models and considering the OOD problem in the latent domain \cite{rabanser2019failing}.
Reference \cite{NalisnickAl19} proposed a method based on the  empirical entropy.
\cite{chalapathy2018group, zhang2020towards} used AAE and VAE neural networks to transform the data.

Many of the ML methods simply directly or indirectly
use likelihood for OOD, i.e., how likely was it
that data came from $P$? However, likelihood is not a good
measure. As an example, suppose that $P$ is the uniform IID distribution
over $[0,1]$. Then any sequence $x^M$ is equally likely,
i.e., nothing is OOD according to likelihood. 
The statistical tests for OOD therefore use the samples
$x^M$ to generate an alternative distribution. 
In the KS test \cite{corder2014nonparametric},
the empirical CDF of $x^M$ is compared with the
CDF of $P$. In the Pearson $\chi^2$ test \cite{lehmann1986testing}, the empirical
PMF of $x^M$ is compared with the PMF of $P$. This shows that
OOD detection has to be generative, in the sense of coming up with
an alternative distribution for the OOD data, and this is also consistent
with Kolmogorov Martin-L\"{o}f
randomness. 

It is difficult to extend KS test to higher dimensions since computing the empirical CDFs depends on the arrangements of dimensions. Some methods were proposed in \cite{fasano1987multidimensional, hagen2021accelerated}. 
As opposed to that, our method can be used for high-dimensional
data.

The method in this paper is related to Rissanen's minimum description length (MDL) \cite{Rissanen78,GrunwaldBook}.
We have used this to find transients in
\cite{SabetiHost17} and it was used for change point
detection in \cite{yamanishi2018model, yamanishi2021change}.
The problem and methodology in this paper are different, so
these papers are not directly applicable.

\section{Methodology}\label{sec:methodology}
There is no true generalization of universal source coding used for
Atypicality in \cite{HostSabetiWalton15}
in
the real case, but we will still maintain the idea of coding.
For assuming that $\mathbf{x}^M$ comes from the default distribution, $P$, the codelength of the test set as argued by Rissanen \cite{Rissanen86} is
\begin{align*}
  L_P(\mathbf{x}^M)&=-\log P(\mathbf{x}^M)=-\log \prod_{i=1}^M P(\mathbf{x}_i).
\end{align*}
We would like to
compare this with a ``universal" codelength.

Our starting point is Martin-L\"{o}f's idea of a P-test \cite{LiVitanyi}. We consider
a statistic $\mathbf{T}:\mathbb{R}^n\to\mathbb{R}^k$ with
$\hat {\mathbf{t}}=\frac 1 M\sum_{i=1}^M {\mathbf{T}}(\mathbf{x}_i)$. If
$\hat {\mathbf{t}}\neq E_P[{\mathbf{T}}(\mathbf{x})]$ one could consider the test data
OOD. In order to put this both in a likelihood ratio test framework and a coding
framework, we need to associate an alternative distribution with the statistic ${\mathbf{T}}$. The natural
choice for such a distribution is the maximum entropy distribution 
which has the form \cite{CoverBook}
\begin{align}
  P_T(\mathbf{x})&=\exp\left(\boldsymbol{\lambda}^T {\mathbf{T}}(\mathbf{x})-A(\boldsymbol{\lambda})\right) \label{eq:maxent}
\end{align}
Here $A(\boldsymbol{\lambda})$ is
the log-partition functions ensuring
$P_T$ integrates to one. The statistic therefore has to be
one that has a valid maximum entropy distribution. While the maximum
entropy distribution seems reasonable, it also has the following straighforward
optimality property


\begin{proposition}\label{thm:maxentmaxmin}
	Consider the set of distributions $P'$ that satisfy 
	$E_{P'}[{\mathbf{T}}(\mathbf{x})]={\mathbf{t}}$. Among those the maximum
	entropy distribution $P_T$ is the minimax coding distribution,
	i.e., it achieves
\begin{align*}
	\inf_{P:E_P[{\mathbf{T}}(x)]={\mathbf{t}}}\sup_{P':E_{P'}[{\mathbf{T}}(x)]={\mathbf{t}}} E_{P'}[-\log P(\mathbf{x})]
\end{align*}
\end{proposition}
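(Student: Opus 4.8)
The plan is to exploit the defining feature of the maximum entropy distribution: its negative log-density is affine in the statistic $\mathbf{T}$, so the expected codelength $E_{P'}[-\log P_T(\mathbf{x})]$ depends on $P'$ only through $E_{P'}[\mathbf{T}(\mathbf{x})]$. First I would substitute the exponential form \eqref{eq:maxent} to obtain
\[
  E_{P'}[-\log P_T(\mathbf{x})] = -\boldsymbol{\lambda}^T E_{P'}[\mathbf{T}(\mathbf{x})] + A(\boldsymbol{\lambda}) = -\boldsymbol{\lambda}^T \mathbf{t} + A(\boldsymbol{\lambda})
\]
for \emph{every} $P'$ in the constraint set. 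The right-hand side is a constant; taking $P'=P_T$ identifies it as the differential entropy $H(P_T)=E_{P_T}[-\log P_T(\mathbf{x})]$. This already yields the upper bound: choosing $P=P_T$ makes the inner supremum over $P'$ equal to this constant $H(P_T)$, since the integrand no longer depends on $P'$.

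For the matching lower bound I would show that no admissible $P$ can beat $P_T$. The key step is to restrict the inner supremum to the single choice $P'=P_T$, which is feasible because $\boldsymbol{\lambda}$ is calibrated so that $E_{P_T}[\mathbf{T}(\mathbf{x})]=\nabla A(\boldsymbol{\lambda})=\mathbf{t}$. For any admissible $P$ I would then write
\[
  \sup_{P':E_{P'}[\mathbf{T}(\mathbf{x})]=\mathbf{t}} E_{P'}[-\log P(\mathbf{x})] \ge E_{P_T}[-\log P(\mathbf{x})] = H(P_T) + D(P_T\|P) \ge H(P_T),
\]
using the decomposition of the cross-entropy into entropy plus relative entropy and the nonnegativity of the Kullback--Leibler divergence. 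Taking the infimum over $P$ shows the minimax value is at least $H(P_T)$.

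Combining the two bounds shows the minimax value equals $H(P_T)$ and is attained at $P=P_T$; moreover, since equality in $D(P_T\|P)\ge 0$ forces $P=P_T$, the minimizer is unique. I do not expect a serious obstacle here, as the argument is essentially the robustness property of exponential families. The only points requiring care are technical: verifying that $P_T$ lies in the feasible set via the gradient identity $\nabla A(\boldsymbol{\lambda})=\mathbf{t}$, and ensuring the differential entropy $H(P_T)$ and the cross-entropy integrals are finite so that the interchange of expectation and logarithm and the relative-entropy decomposition are legitimate.
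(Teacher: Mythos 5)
Your proof is correct and follows essentially the same route as the paper's: both arguments rest on the observation that $E_{P'}[-\log P_T(\mathbf{x})]$ is constant over the constraint set (by the affine form of $-\log P_T$ in $\mathbf{T}$) together with the optimality of $P_T$ under data generated by $P_T$, i.e., nonnegativity of $D(P_T\|P)$. The paper phrases the second step as a proof by contradiction while you give a direct upper/lower-bound sandwich and spell out the technical points (feasibility of $P_T$ via $\nabla A(\boldsymbol{\lambda})=\mathbf{t}$, uniqueness, finiteness), which makes your write-up a more complete rendering of the same idea.
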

\begin{proof}
	It is clear from (\ref{eq:maxent}) that $E_{P'}[-\log P(\mathbf{x})]$ is independent of $P'$. If there were
some other distribution $\tilde P$ that had lower minimax
coding length than $P_T$ for all $P'$, it would therefore have to be strictly smaller for
all $P'$. But $P_T$ is optimum if the data is actually generated
by $P_T$, so this is not possible.
\end{proof}
Maximum entropy is actually commonly used in universal source coding. One
method for universal source coding (of discrete sources) is to first transmit
the type of the sequence (i.e., the histogram) and then the specific sequence
with that type (e.g., \cite[Section 13.2]{CoverBook} -- assuming a uniform distribution over sequences, which is
precisely maximum entropy.

The importance of this property can be seen as follows. The log-likelihood ratio
test with $P_T$ can be written as
\begin{align*}
  L_P(\mathbf{x}^M)-L_{P_T}(\mathbf{x}^M)\geq \tau 
\end{align*}
To the first order, the detection performance depends on
$E_{P'}[L_P(\mathbf{x})]-E_{P'}[L_{P_T}(\mathbf{x})]$.
According to Proposition \ref{thm:maxentmaxmin} the 
second term is independent of $P'$. In many cases the
first term is also independent of $P'$. For example
if $P$ is uniform  over $[0,1]$ or if $P$ is
itself a maximum entropy distribution for a subset of
the statistic ${\mathbf{T}}$. In those cases, using maximum entropy
results in a detection performance independent of $P'$
and only dependent on ${\mathbf{T}}$ -- to the first order.

It is natural to think that a more complex statistic 
will be able to capture more types
of deviation from the default distribution. But it might not be
better for OOD detection, as the following theorem shows. Consider a
maximum entropy distribution $P_T$ and suppose that the default
distribution $P=P_{t_0}=P_{T}(\mathbf{x};{\mathbf{T}}=\mathbf{t}_0)$. Set a
desired false alarm
probability $\alpha=P_{FA}$ and detection probability
$\beta=P_D$. Let $\mathcal{S}_{\alpha,\beta}(M)$ be the set of
distributions $P_t=P_{T}(\mathbf{x};{\mathbf{T}}=\mathbf{t})$ that can
be detected with $P_{FA}\leq\alpha$ and $P_D\geq\beta$ with $M$ samples. The question is
how little deviation from the default distribution is needed for detection;
we measure this by the radius $\inf_{P_t\in S_{\alpha,\beta}(M)} D(P_t\|P_{t_0})$, where
$D(\cdot\|\cdot)$ is relative entropy. This radius can be calculated
asymptotically as $M\to\infty$,
To prove it, we need the following Lemma

\begin{lemma}\label{thm:chi2}
Let ${\mathbf{T}}$ be a statistic and $P_T$ the corresponding maximum
entropy distribution. Suppose that data is generated according to
$P_T(\mathbf{x};{\mathbf{T}}={\mathbf{t}})$ and let $\hat {\mathbf{t}}=\frac 1 M\sum_{i=1}^MT(\mathbf{x}_i)$.
Then
\begin{align*}
  2\ln 2(-\log P(\mathbf{x}^M;{\mathbf{T}}={\mathbf{t}})
  +\log P(\mathbf{x}^M;{\mathbf{T}}=\hat {\mathbf{t}})\stackrel{D}{\to }\chi^2_m
\end{align*}
\end{lemma}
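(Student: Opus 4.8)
The plan is to recognize this as a Wilks-type result for the exponential family \eqref{eq:maxent} and to reduce everything to the asymptotic normality of the empirical mean $\hat{\bt}$. Writing the maximum-entropy distribution in its natural parameterization, the total (natural) log-likelihood is $\ell(\blambda)=\sum_{i=1}^M\big(\blambda^T\bT(\bx_i)-A(\blambda)\big)=M\big(\blambda^T\hat{\bt}-A(\blambda)\big)$. The first structural fact I would establish is that $P(\bx^M;\bT=\hat{\bt})$ is precisely the maximum-likelihood fit: differentiating and using $\nabla A(\blambda)=E_{\blambda}[\bT]$ shows that the maximizer $\hat{\blambda}$ satisfies $E_{\hat{\blambda}}[\bT]=\hat{\bt}$, i.e.\ its mean parameter is $\hat{\bt}$. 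Writing $\blambda_0$ for the true natural parameter, so that $E_{\blambda_0}[\bT]=\bt$, I identify $P(\bx^M;\bT=\bt)$ with $\blambda_0$.

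Next I would convert the log-likelihood difference into a relative entropy, which is the cleanest way to isolate the randomness into $\hat{\bt}$. A direct computation using the self-consistency $E_{\hat{\blambda}}[\bT]=\hat{\bt}$ yields the identity
\[
\tfrac{1}{M}\big(\ell(\hat{\blambda})-\ell(\blambda_0)\big)=(\hat{\blambda}-\blambda_0)^T\hat{\bt}-A(\hat{\blambda})+A(\blambda_0)=D(P_{\hat{\bt}}\|P_{\bt}),
\]
so that after the $2\ln 2$ conversion from bits to nats the quantity in the lemma equals exactly $2M\,D(P_{\hat{\bt}}\|P_{\bt})$.

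I would then Taylor-expand the relative entropy about $\bt$. Since $D(P_{\bt'}\|P_{\bt})$ vanishes together with its gradient at $\bt'=\bt$, and its Hessian in the mean parameterization is the Fisher information, which for this family equals $\bSigma^{-1}$ with $\bSigma=\nabla^2 A(\blambda_0)=\mathrm{Cov}_{\blambda_0}(\bT)$ the per-sample covariance of the statistic, I obtain
\[
2M\,D(P_{\hat{\bt}}\|P_{\bt})=M(\hat{\bt}-\bt)^T\bSigma^{-1}(\hat{\bt}-\bt)+o_P(1).
\]
Because $\hat{\bt}$ is an IID sample mean of the $\bT(\bx_i)$ with mean $\bt$ and covariance $\bSigma$, the central limit theorem gives $\sqrt{M}(\hat{\bt}-\bt)\stackrel{D}{\to}\mathcal{N}(0,\bSigma)$. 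The standard fact that $\bY^T\bSigma^{-1}\bY\sim\chi^2_m$ for $\bY\sim\mathcal{N}(0,\bSigma)$, with $m$ the dimension of $\bT$, then delivers the claimed $\chi^2_m$ limit via Slutsky's theorem.

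The main obstacle I expect is the rigor of the expansion, namely verifying the regularity conditions that make the remainder genuinely $o_P(1)$: consistency of the MLE ($\hat{\blambda}\stackrel{P}{\to}\blambda_0$, which holds once $\hat{\bt}$ lands in the interior of the mean-parameter space, an event of probability tending to one since $\hat{\bt}\to\bt$ by the law of large numbers), smoothness of the log-partition $A$ so that $\nabla^2 A$ is continuous and the cubic remainder is controlled, and nondegeneracy of $\bSigma$ so that $\bSigma^{-1}$ exists. Given the smoothness inherent in maximum-entropy exponential families these conditions hold, and the argument goes through.
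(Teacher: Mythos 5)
Your proposal is correct, and it takes a genuinely different route from the paper's proof, though both are Wilks-type arguments. The paper works in the natural parameter space: it invokes (as a cited black-box result) the asymptotic normality of the MLE, $\sqrt{M}(\hat{\blambda}-\blambda)\stackrel{D}{\to}N(0,\bJ^{-1})$, Taylor-expands the log-likelihood difference around $\hat{\blambda}$ using that the gradient vanishes at the MLE, and then needs Slutsky's theorem twice (to replace $\bJ(\hat{\blambda})$ by $\bJ$ and to dispose of the remainder, the latter via a somewhat awkward ``delta method'' step). You instead work in the mean parametrization: your key structural step is the \emph{exact} exponential-family identity $\ell(\hat{\blambda})-\ell(\blambda_0)=M\,D(P_{\hat{\bt}}\|P_{\bt})$, which isolates all randomness in the sample mean $\hat{\bt}$; you then Taylor-expand the KL divergence (whose Hessian in mean coordinates is $\bSigma^{-1}$) and conclude from the plain CLT applied to the IID average $\hat{\bt}$. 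What your route buys: it is more self-contained (the only probabilistic input is the CLT for sample means, rather than MLE asymptotic theory, which in exponential families is anyway proved by pushing that same CLT through the map $\hat{\bt}\mapsto\hat{\blambda}$), the remainder control is transparent ($M\|\hat{\bt}-\bt\|^3=O_P(M^{-1/2})$), and the KL identity connects this lemma directly to the radius $D(P_t\|P_{t_0})$ that Theorem \ref{thm:complexity} is stated in terms of; it also sidesteps a sign slip in the paper's expansion of $R$ (the paper's displayed $R$ equals $\ell(\blambda)-\ell(\hat{\blambda})$, the negative of the quantity in the lemma). What the paper's route buys: given the textbook citations, it is shorter, and it stays in the coordinates where the Fisher information is immediately $\nabla_{\blambda}^2 A$. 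The two quadratic forms agree asymptotically, since $M(\hat{\blambda}-\blambda)^T\bJ(\hat{\blambda}-\blambda)$ and $M(\hat{\bt}-\bt)^T\bSigma^{-1}(\hat{\bt}-\bt)$ differ by $o_P(1)$ under the change of variables $\bt=\nabla A(\blambda)$.
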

\begin{proof}
Maximum entropy distributions (\ref{eq:maxent}) are in the exponential family. Notice that $\hat {\mathbf{t}}=\frac 1 M\sum_{i=1}^M {\mathbf{T}}(\mathbf{x}_i)$ is the maximum likelihood estimator for this model through
some transformation $\hat {\mathbf{t}}\mapsto\hat{\boldsymbol{\lambda}}$ \cite{MoulinVeeravalliBook}. We also have \cite{MoulinVeeravalliBook}
\begin{align*}
\sqrt{M}\left(\hat{\boldsymbol{\lambda}}-{\boldsymbol{\lambda}}\right)
  \stackrel{D}{\to}N(0,\mathbf{J}^{-1})
\end{align*}
where $\mathbf J = \mathbf J(\boldsymbol{\lambda})$ is the Fisher information matrix.
Let us use the notation $\mathbf{J}(\hat{\boldsymbol{\lambda}}) = \nabla_{\boldsymbol{\lambda}}^2 \left(A(\boldsymbol{\lambda})\right)|_{\boldsymbol{\lambda}=\hat{\boldsymbol{\lambda}}}$, so:
\begin{align*}
    E\left[\mathbf{J}(\hat{\boldsymbol{\lambda}})\right] &= -\nabla_{\boldsymbol{\lambda}}^2\ln{P_T(x;\boldsymbol{\lambda})} = \nabla_{\boldsymbol{\lambda}}^2 A(\boldsymbol{\lambda}) = \mathbf{J}
\end{align*}
Now
\begin{align*}
  R&=-\ln P(\mathbf{x}^M;{\mathbf{T}}={\mathbf{t}})
  +\ln P(\mathbf{x}^M;{\mathbf{T}}=\hat {\mathbf{t}}) \nonumber\\
  &= \boldsymbol{\lambda}^TT(\mathbf{x}^M)-MA(\boldsymbol{\lambda})
  -\hat{\boldsymbol{\lambda}}^TT(\mathbf{x}^M)+MA(\hat{\boldsymbol{\lambda}})
\end{align*}
We  expand $R$ as a Taylor series around $\hat{\boldsymbol{\lambda}}$.
Notice that $\left.\frac{\partial R}{\partial \boldsymbol{\lambda}}\right|_{\boldsymbol{\lambda}=\hat{\boldsymbol{\lambda}}}=0$ because
$\hat{\boldsymbol{\lambda}}$ is the MLE. Thus, 
\begin{align*}
  R &= -\frac M 2 \left(\hat{\boldsymbol{\lambda}}-{\boldsymbol{\lambda}}\right)^T\mathbf{J}(\hat{\boldsymbol{\lambda}})
  \left(\hat{\boldsymbol{\lambda}}-{\boldsymbol{\lambda}}\right)
  +o(\boldsymbol{\lambda}^2) 
\end{align*}

According to \cite[p. 271]{MoulinVeeravalliBook}
$\mathbf J(\boldsymbol{\lambda})$ is a continuous
function of $\boldsymbol{\lambda}$.
As $\hat{\boldsymbol{\lambda}} \stackrel{p}{\to}\boldsymbol{\lambda}$ then
 $\mathbf J(\hat{\boldsymbol{\lambda}}) \stackrel{p}{\to}\mathbf{J}$ by Slutsky's theorem \cite{SerflingBook}.
Therefore, also by Slutsky's theorem, the first term in $R$ is $\chi^2$. Finally, applying the univariate delta method on the function:
\begin{equation*}
    g(\hat{\boldsymbol{\lambda}}) = R + \frac{M}{2} \left(\hat{\boldsymbol{\lambda}}-{\boldsymbol{\lambda}}\right)^T\mathbf{J}(\hat{\boldsymbol{\lambda}})
  \left(\hat{\boldsymbol{\lambda}}-{\boldsymbol{\lambda}}\right)
\end{equation*}
results in $g(\hat{\boldsymbol{\lambda}}) \stackrel{p}{\to} 0$, which proves the Lemma.

%

\end{proof}

\begin{theorem}\label{thm:complexity}
Let $P_T$ be a maximum entropy distribution.
Consider detection between $\mathbf{T}=\mathbf{t}_0$ and $\mathbf{T}\neq{\mathbf{t}_0}$. Fix the false alarm
probability $\alpha=P_{FA}$ and the detection probability
$\beta=P_D$ as $M\to\infty$. 
Let $\mathcal{S}_{\alpha,\beta}(M)$ be the set of
distributions $P_t=P_{T}(\mathbf{x};{\mathbf{T}}=\mathbf{t})$ that can
be detected with $P_{FA}\leq\alpha$ and $P_D\geq\beta$ with $M$ samples..
Then
\begin{align*}
  &\lim_{M\to\infty} \inf_{P_t\in S_{\alpha,\beta}(M)} M\frac{1}{2\ln 2}D(P_t\|P_{t_0}) = F_{\chi^2_k}^{-1}(\beta)-F_{\chi^2_k}^{-1}(\alpha)
\end{align*}
where $F_{\chi^2_k}$ is the CDF of the $\chi^2$ distribution with
$k$ degrees of freedom.
\end{theorem}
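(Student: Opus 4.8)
The plan is to realise $\mathcal{S}_{\alpha,\beta}(M)$ through the generalized likelihood ratio test (GLRT) that Lemma~\ref{thm:chi2} already analyses, and then track how its operating characteristics depend on $D(P_t\|P_{t_0})$. The natural test of $\mathbf{T}=\mathbf{t}_0$ against $\mathbf{T}\neq\mathbf{t}_0$ rejects when
\[
R=2\ln 2\left(\log P(\mathbf{x}^M;\mathbf{T}=\hat{\mathbf{t}})-\log P(\mathbf{x}^M;\mathbf{T}=\mathbf{t}_0)\right)
\]
exceeds a threshold $\eta$. Under the null the data are generated by $P_{t_0}$, so Lemma~\ref{thm:chi2} (with $\mathbf{t}=\mathbf{t}_0$) gives $R\stackrel{D}{\to}\chi^2_k$; fixing $P_{FA}=\alpha$ therefore pins $\eta$ to the corresponding $\chi^2_k$ quantile. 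First I would argue that, among tests with $P_{FA}\le\alpha$, this GLRT is asymptotically the most powerful one in the relevant local sense, so that $\mathcal{S}_{\alpha,\beta}(M)$ is exactly the set of $P_t$ detectable by it.

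The second step is the behaviour of $R$ under the alternative. Writing $\mathbf{t}\mapsto\boldsymbol{\lambda}$, $\mathbf{t}_0\mapsto\boldsymbol{\lambda}_0$, I would invoke local asymptotic normality of the exponential family: because the infimum forces $D(P_t\|P_{t_0})=\Theta(1/M)$, the relevant alternatives are local, $\boldsymbol{\lambda}=\boldsymbol{\lambda}_0+\mathbf{h}/\sqrt{M}$, and $\sqrt{M}(\hat{\boldsymbol{\lambda}}-\boldsymbol{\lambda})\stackrel{D}{\to}N(0,\mathbf{J}^{-1})$ as in the proof of Lemma~\ref{thm:chi2}. Splitting $R=2\ln 2(\log P(\hat{\mathbf{t}})-\log P(\mathbf{t}))+2\ln 2(\log P(\mathbf{t})-\log P(\mathbf{t}_0))$, the first piece is central $\chi^2_k$ by the Lemma and the second is the log-likelihood ratio of the true against the null parameter; the same quadratic Taylor expansion used there yields $R\stackrel{D}{\to}\chi^2_k(\delta)$, a noncentral $\chi^2$ with $\delta=\mathbf{h}^T\mathbf{J}\mathbf{h}=M(\boldsymbol{\lambda}-\boldsymbol{\lambda}_0)^T\mathbf{J}(\boldsymbol{\lambda}-\boldsymbol{\lambda}_0)$. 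The key identification is that this is the same quadratic form governing relative entropy, $D(P_t\|P_{t_0})=\frac{1}{2\ln 2}(\boldsymbol{\lambda}-\boldsymbol{\lambda}_0)^T\mathbf{J}(\boldsymbol{\lambda}-\boldsymbol{\lambda}_0)+o(\|\boldsymbol{\lambda}-\boldsymbol{\lambda}_0\|^2)$, so that $\delta$ and $M\,D(P_t\|P_{t_0})$ are proportional with exactly the normalization appearing in the statement.

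The third step converts the power constraint into the claimed quantile difference. The requirement $P_D\ge\beta$ reads $P(\chi^2_k(\delta)>\eta)\ge\beta$, and the least-divergent detectable alternative saturates it. To first order the noncentral law is a mean-shifted central one, $\chi^2_k(\delta)\approx\delta+\chi^2_k$, so the critical $\delta^\ast$ solves a relation between the $\alpha$-level null threshold and the $\beta$-level shifted threshold, i.e. $\delta^\ast$ is a difference of two $\chi^2_k$ quantiles determined by $\alpha$ and $\beta$ (matching the right-hand side under the upper-tail convention for $F_{\chi^2_k}$). Every alternative lying on this detection boundary shares the same leading-order $D(P_t\|P_{t_0})$, since only the value of the quadratic form matters; hence the infimum over $\mathcal{S}_{\alpha,\beta}(M)$ is attained there, and after multiplying by the stated $\tfrac{1}{2\ln 2}M$ and letting $M\to\infty$ the limit is exactly $\delta^\ast$.

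I expect the main obstacle to be making the last step rigorous. The set $\mathcal{S}_{\alpha,\beta}(M)$ is defined through an $M$-dependent optimization, so one must interchange $\inf$ with $\lim_{M\to\infty}$, which requires the convergence of $R$ and of the power function $\delta\mapsto P(\chi^2_k(\delta)>\eta)$ to hold \emph{uniformly} over the shrinking neighbourhood of $\boldsymbol{\lambda}_0$ that the infimum selects. This is where local asymptotic normality / Le~Cam theory does the real work: it is needed both to control the $o(\cdot)$ remainders in the two Taylor expansions uniformly over local alternatives and to certify that no non-GLRT test achieves a strictly smaller detection radius, so that the boundary is genuinely characterized by $\delta^\ast$ rather than by the specific GLRT we chose.
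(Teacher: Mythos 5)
Your proposal follows the same skeleton as the paper's proof: reject when the codelength/GLRT statistic $R$ exceeds a threshold, invoke Lemma~\ref{thm:chi2} under the null so that $R\stackrel{D}{\to}\chi^2_k$ fixes the threshold through (\ref{eq:PFAp}), then consider alternatives at distance $D(P_t\|P_{t_0})=\Theta(1/M)$ and split $R$ into the log-likelihood ratio of the true parameter against $\mathbf{t}_0$ plus the GLRT term around the true parameter. The divergence is in what you conclude about that first piece. The paper takes a sequence $P_{t_M}$ with $MD(P_{t_M}\|P_{t_0})\to K$, argues via its normalization (\ref{eq:MDlimit2}) and the law of large numbers that the likelihood-ratio term converges \emph{in probability to the constant} $K$, and then applies Slutsky's theorem to get $R\stackrel{D}{\to}\chi^2_k+K$ --- a mean-shifted \emph{central} chi-square --- from which the stated right-hand side $F_{\chi^2_k}^{-1}(\beta)-F_{\chi^2_k}^{-1}(\alpha)$ follows exactly by solving (\ref{eq:PFAp}) and (\ref{eq:PDp}) for $K$. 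You instead conclude $R\stackrel{D}{\to}\chi^2_k(\delta)$, a \emph{noncentral} chi-square with $\delta=\mathbf{h}^T\mathbf{J}\mathbf{h}$, which is what your $\sqrt{M}$-local LAN scaling produces: under that scaling the likelihood-ratio term is not asymptotically deterministic but a nondegenerate Gaussian, and that randomness is precisely what creates noncentrality.

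This is where your proof has a genuine gap: the step ``to first order the noncentral law is a mean-shifted central one, $\chi^2_k(\delta)\approx\delta+\chi^2_k$'' fails, and with it the derivation of the stated formula. In the limit the noncentrality $\delta$ stays bounded (it is proportional to the finite limit of $MD$), so there is no asymptotic regime in which $\chi^2_k(\delta)$ and $\delta+\chi^2_k$ merge: they share a mean but have variances $2k+4\delta$ versus $2k$, and they define genuinely different detection boundaries. Concretely, for $k=1$, $\alpha=0.05$, $\beta=0.95$, the noncentral condition $P\bigl(\chi^2_1(\delta^\ast)>F_{\chi^2_1}^{-1}(1-\alpha)\bigr)=\beta$ gives $\delta^\ast\approx(1.96+1.645)^2\approx 13$, whereas the theorem's right-hand side is $F_{\chi^2_1}^{-1}(0.95)-F_{\chi^2_1}^{-1}(0.05)\approx 3.8$. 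So carried through rigorously, your argument proves an equality with a different (noncentral-quantile) right-hand side, not the one stated; the claimed formula is recovered only by an approximation that is not valid. To land on the paper's result you would need what the paper asserts: that the cross term $\ln P_{t_M}(\mathbf{x}^M)-\ln P_{t_0}(\mathbf{x}^M)$ converges in probability to a constant so that Slutsky applies and the limit is an exact shift of a central $\chi^2_k$. (There is a real tension here --- the constant-limit claim sits uneasily with $MD\to K>0$, and your LAN treatment is the textbook one --- but as a proof of the statement as written, your route stops short; your additional concerns about uniformity over shrinking neighborhoods and about optimality of the GLRT in defining $\mathcal{S}_{\alpha,\beta}(M)$ are fair, though the paper does not address them either.)
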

\begin{proof}
Let
\begin{align*}
  R&=2\ln 2(-\log P(\mathbf{x}^M;{\mathbf{T}}={\mathbf{t}})
  +\log P(\mathbf{x}^M;{\mathbf{T}}=\hat {\mathbf{t}}))
\end{align*}
Then according to Lemma \ref{thm:chi2} the false alarm probability satisfies
\begin{align}
  \lim_{M\to\infty}P_{FA}&=\lim_{M\to\infty}P(R>\tau)=1-F_{\chi^2_k}(\tau) \label{eq:PFAp}
\end{align}

Consider a sequence of distributions $P_{t_M}(\mathbf{x})=P_T(\mathbf{x};{\mathbf{T}}=t_M)$ that
satisfy
\begin{align}
  \lim_{M\to\infty}MD(P_{t_M}\|P_{t_0}) &= K \label{eq:MDlimit}
\end{align}
for some constant $K$. For a maximum entropy distribution, we have
\begin{align*}
  MD(P_{t_M}\|P_{t_0})&=M(\boldsymbol{\lambda}-\boldsymbol{\lambda}_0)^T
  E_{\boldsymbol{\lambda}}[{\mathbf{T}}(\mathbf{x}]+M(A(\boldsymbol{\lambda}_0)-A(\boldsymbol{\lambda}))
\end{align*}
In order for (\ref{eq:MDlimit}) to be satisfied, we must then
have 
\begin{align}
	\lim_{M\to\infty}M(\boldsymbol{\lambda}-\boldsymbol{\lambda}_0)=\mathbf{v} \label{eq:MDlimit2}
\end{align}
for some vector $\mathbf{v}$.
 Now write
\begin{align*}
  R &= 2\ln 2\left(-\log P_{t_0}(\mathbf{x}^M)+\log P_{t_M}(\mathbf{x}^M)\right.\nonumber\\
  &\left.-\log P_{t_M}(\mathbf{x}^M)+\log P_{\hat {\mathbf{t}}}(\mathbf{x}^M)\right) 
\end{align*}
The first term can be written as
\begin{align*}
 \MoveEqLeft -\ln P_{t_0}(\mathbf{x}^M)+\ln P_{t_M}(\mathbf{x}^M)\nonumber\\
  &= (\boldsymbol{\lambda}-\boldsymbol{\lambda}_0)^T
  \sum_{i=1}^MT(\mathbf{x}_i)+M(A(\boldsymbol{\lambda}_0)-A(\boldsymbol{\lambda})) \nonumber\\
  &=M(\boldsymbol{\lambda}-\boldsymbol{\lambda}_0)^T
  \frac 1 M\sum_{i=1}^MT(\mathbf{x}_i)+M(A(\boldsymbol{\lambda}_0)-A(\boldsymbol{\lambda})) \nonumber\\
  &\stackrel{P}{\to} K
\end{align*}
because of the law of large numbers and (\ref{eq:MDlimit}) and
(\ref{eq:MDlimit2}).
At the same time
\begin{align*}
  2\ln 2\left(-\log P_{t_M}(\mathbf{x}^M)+\log P_{\hat t}(\mathbf{x}^M)\right)\stackrel{D}{\to}\chi_k^2
\end{align*}
according to Lemma \ref{thm:chi2}. There is the difference that the
parameter is not fixed, but examining the proof of Lemma \ref{thm:chi2}
shows that this makes no difference. Then from
Slutsky's theorem
\begin{align*}
  R&\stackrel{D}{\to}\chi^2_k+K
\end{align*}
Thus
\begin{align}
    \lim_{M\to\infty}P_D&=F_{\chi^2_k}(\tau-K) \label{eq:PDp}
\end{align}
The theorem now follows by solving (\ref{eq:PFAp}) and (\ref{eq:PDp}) with respect to $K$ for given $\alpha$ and $\beta$.

\end{proof}
Perhaps a more intuitive measure of distance between distributions
is total variation distance, for continuous distributions
\begin{align*}
  d_\text{TV}(P_{y_M},P_{y_0})=\frac 1 2
  \int |P_{y_M}(x)-P_{y_0}(x)|dx
\end{align*}
Then Pinsker's inequality gives
\begin{align*}
  \lim_{M\to\infty} \inf_{P_{y_M}} \sqrt Md_\text{TV}(P_{t_M},P_{t_0}) &\leq \sqrt{F_{\chi^2_m}^{-1}(\beta)-F_{\chi^2_m}^{-1}(\alpha)}
\end{align*}

What this means is that the closest distribution that can
be detected as OOD is
\begin{align*}
  D(P_{t_M}\|P_{t_0}) &\approx \frac{F_{\chi^2_m}^{-1}(\beta)-F_{\chi^2_m}^{-1}(\alpha)}{M}
\end{align*}
This increases nearly linearly with $m$. Thus, higher complexity
models are more difficult to detect, or more precisely, small
deviations in high complexity models are more difficult to detect.

The conclusion is that one should try to detect OOD
with the simplest statistics possible. Yet, the statistic
also has to be complex enough to capture deviations.
The solution to this dilemma is consider simple and complex
statistics simultaneously, but with a penalty for
more complex models in light of Theorem \ref{thm:complexity}.

The statistics  have to be chosen with the default distribution $P$ in mind.
Intuitively, the statistics have to indicate deviations from $P$ well.
To detect small deviations in distribution, one would like statistics
so that $D(P_T\|P)$ can be made small. One way to obtain this is if $P$
is itself a maximum entropy distribution for some value of ${\mathbf{T}}$ -- then
$D(P_T\|P)$ can be made arbitrarily small. Another possibility is to
have a sequence of statistics $T_i$ so that $D(P_{T_i}\|P)$ can
be made arbitrarily small -- but with a complexity
cost according to Theorem \ref{thm:complexity}. As an example, suppose that the default
distribution is $\chi^2$. If ${\mathbf{T}}(x)=(x,x^2)$ (mean and variance) the
maximum entropy distribution is Gaussian, which is not close to $\chi^2$; 
the consequence is that mean and variance has to change by large amounts
for detection.
But if ${\mathbf{T}}(x)=(x,\ln x)$, the maximum entropy distribution is Gamma, of
which the $\chi^2$ is a special case. 

Using only maximum entropy distributions might seem limiting. However,
the alternative distribution does not necessarily have to be modeled well.
Suppose as an example the default distribution is $U[0,1]$, while data is generated
according to $U[0,\theta]$, which is not maximum entropy. If
$\theta>1$, as soon as some $x>1$ is seen, the default coder will
give a codelength of infinity, and data will be declared OOD. If $\theta<1$,
the histogram distribution described below, Section \ref{sec:histogram} can be used, and this will detect $U[0,\theta]$.

\subsection{Coding}\label{sec.MDL}
Consider a sequence (finite or countable) of statistics $\mathbf{T}_i$ of varying complexity. We would
like to combine all the statistics into a single test. This is similar to
what is done for P-tests in Martin-L\"{o}f randomness \cite{LiVitanyi}. We use a coding
approach, inspired by Kolmogorov complexity and universal source coding
in Atypicality \cite{HostSabetiWalton15}. The encoder and decoder both know the sequence of
possible statistics $\mathbf{T}_i$. The idea is to use these statistics to encode
the sequence $\mathbf{x}^M$ with the shortest codelength possible. 
Consider first a single statistic ${\mathbf{T}}$. One approach is that the
encoder first calculates $\hat {\mathbf{t}}=\frac 1 M\sum_{i=1}^M {\mathbf{T}}(\mathbf{x}_i)$, conveys that to the decoder, and then
encodes $\mathbf{x}^M$ with $P_T$. Since $\hat {\mathbf{t}}$ is real-valued, it
has to be quantized to minimize total codelength. It will be noticed
that is exactly as in Rissanen's minimum description length (MDL) \cite{Rissanen78,GrunwaldBook},
and we can therefore use the rich theory from MDL. For example,
one can use sequential coding instead of the 
two-step coding above. However, our aim is
not to find a good model as in MDL.

Now consider the whole sequence of statistics ${\mathbf T}_i$. One approach
to use the statistic $\mathbf{T}_i$ that results in the shortest codelength.
From a coding point of view, the encoder needs to tell
the decoder which statistic was used. This can be encoded using Elias 
code for the integers \cite{Rissanen83,Elias75}, which uses $\log^* m+c$, where
$\log^* m=\log k+\log\log k+\cdots$, continuing until the argument to
the $\log$ becomes negative, and $c$ is a constant making Kraft's
inequality satisfied with equality.

We can then write the resulting test explicitly as
\begin{align}
  \min_{i}-\log  P'_{T_i}(\mathbf{x}^M)+L(\mathbf{T}_i)+\log^*(i)+\tau
  < -\log  P(\mathbf{x}^M) \label{eq:MDLtest}
\end{align}
where $L(\mathbf{T}_i)$ is the length of the code to encode the (quantized) statistic
$\mathbf{T}_i$ -- we will later describe how precisely the coding is done.
To recap, the coder on the left-hand side works by telling
the decoder which encoder has been used, and then encoding
according to it. However, a more efficient coder can be
obtained by weighting the different coders, the principle
in the Context Tree Weighting (CTW) coder and other coders
\cite{WillemsAl95}. We can then write 
\begin{align}
  -\log\left( \sum_{i=1}^\infty P'_{T_i}(\mathbf{x}^M)2^{-L(\mathbf{T}_i)-\log^*(i)}\right)+\tau 
  < -\log  P(\mathbf{x}^M) \label{eq:WMDL}
\end{align}
We call this \emph{weighting}. Notice that this approach does not
find a model with the shortest description length, and is therefore
distinct from MDL. We will be using (\ref{eq:WMDL}) in our implementation. 

While we know from Theorem \ref{thm:complexity}
that we should consider statistics of varying
complexity, it is not obvious that combining them
using (\ref{eq:MDLtest}) or (\ref{eq:WMDL}) result
in good OOD performance. Theoretical validation
really is only possible asymptotically. The
most meaningful limit is $k\to\infty$ while
$M$ is fixed or $M\ll k$ (if $k$ is fixed
and $M\to\infty$ one can just use the most
complex model without much loss). Not all models
allow $k\ll M$, and we will therefore limit the
analysis to a specific case in the following section.

We will show that in some cases, coding results
in optimum performance in the sense of Theorem
\ref{thm:complexity}. Consider a sequence
of statistics $\mathbf{T}_i$ such that 
$\mathbf{T}_i$ is a subset of the statistics
in $\mathbf{T}_{i+1}$. Suppose that the
default distribution is the maximum entropy
distribution corresponding to $\mathbf{T}_1$,
and suppose the OOD data is generated by
the maximum entropy distribution corresponding to
$\mathbf{T}_k$ for some $k>1$. If $k$ is known,
Theorem \ref{thm:complexity} gives the peformance. As $k$ is not known,
we use (\ref{eq:MDLtest}) or (\ref{eq:WMDL}). 
\begin{theorem}
	Consider the same setup as in Theorem \ref{thm:complexity} and suppose
	(\ref{eq:MDLtest}) or (\ref{eq:WMDL}) is used
	for detection. Assume $L(T_i)=\frac{|\mathbf{T}_i|}{2}\log M$ and ignore $\log^* m$.
	Then the detection performance
	is the same as if $k$ were known and given by 
\begin{align}
  \lim_{M\to\infty} \inf_{P_t\in S_{\alpha,\beta}(M)} \frac M{\log M}\frac{1}{2\ln 2}D(P_t\|P_{t_0}) &\leq \frac {|\mathbf{T}_k|} 2
  \label{eq:kknown}
\end{align}
\end{theorem}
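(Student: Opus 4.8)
The plan is to turn the coding test into a penalized generalized--likelihood--ratio test and read off its operating characteristics from Lemma~\ref{thm:chi2} and the calculation inside the proof of Theorem~\ref{thm:complexity}. For each $i$ define $R_i = 2\ln 2\bigl(-\log P_{t_0}(\mathbf{x}^M)+\log P'_{T_i}(\mathbf{x}^M)\bigr)$, where $P'_{T_i}$ is the family-$i$ maximum--entropy law evaluated at its MLE $\hat{\mathbf{t}}_i$. Converting the codelengths in \eqref{eq:MDLtest} to this scale and dropping $\log^*(i)$ as instructed, the $R$-scale penalty is $2\ln 2\,L(\mathbf{T}_i)=|\mathbf{T}_i|\ln M$, so the min-test rejects $H_0$ exactly when $\max_i\bigl(R_i-|\mathbf{T}_i|\ln M\bigr)>\tau'$ with $\tau'=2\ln 2\,\tau$. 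I would first note that the weighting test \eqref{eq:WMDL} is asymptotically equivalent to this min-test through the sandwich $\max_i a_i \le \log\sum_i 2^{a_i}\le \max_i a_i + O(1)$, valid once a single term dominates, so it suffices to analyze the min-test.

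Next I would determine the limit law of each $R_i$ using the nesting $\mathbf{T}_i\subset\mathbf{T}_{i+1}$. Since $P_{t_0}$ is the family-$1$ maximum--entropy distribution, it is also the family-$i$ distribution whose extra natural coordinates vanish; hence under $H_0$ Lemma~\ref{thm:chi2} gives $R_i\stackrel{D}{\to}\chi^2_{|\mathbf{T}_i|}$ for each fixed $i$. Under $H_1$, with data from $P_{t_k}=P_{t_M}$, I split $R_i=2\ln 2(-\log P_{t_0}+\log P_{t_M})+2\ln 2(-\log P_{t_M}+\log P'_{T_i})$. For $i\ge k$ the family contains $P_{t_M}$, so the second piece $\stackrel{D}{\to}\chi^2_{|\mathbf{T}_i|}$ by Lemma~\ref{thm:chi2} and the first piece $\stackrel{P}{\to}K$ by the law of large numbers exactly as in \eqref{eq:MDlimit}--\eqref{eq:MDlimit2}, giving $R_i\stackrel{D}{\to}\chi^2_{|\mathbf{T}_i|}+K$. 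For the near-extremal alternative whose deviation lies in the new coordinates $\mathbf{T}_k\setminus\mathbf{T}_{k-1}$ (so that it genuinely uses $\mathbf{T}_k$), the families $i<k$ see no shift and $R_i\stackrel{D}{\to}\chi^2_{|\mathbf{T}_i|}$.

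The key structural step is then the collapse of the model search. Subtracting the divergent penalty, under $H_0$ every $R_i-|\mathbf{T}_i|\ln M\to-\infty$, so $P_{FA}\to 0$ for any fixed $\tau'$. Under $H_1$ at the relevant boundary $K\asymp|\mathbf{T}_k|\ln M$, the terms with $i>k$ behave like $\chi^2_{|\mathbf{T}_i|}-(|\mathbf{T}_i|-|\mathbf{T}_k|)\ln M\to-\infty$ and those with $i<k$ likewise diverge to $-\infty$, so with high probability the maximum is attained at $i=k$ and equals $R_k-|\mathbf{T}_k|\ln M\stackrel{D}{\to}\chi^2_{|\mathbf{T}_k|}+K-|\mathbf{T}_k|\ln M$. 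This is precisely the known-$k$ two-part-code test, which is the sense in which ``performance is the same as if $k$ were known.'' Since $P_{FA}\to 0$ automatically, only $P_D\ge\beta$ binds, and because $|\mathbf{T}_k|\ln M\to\infty$ dominates the $O(1)$ quantiles of $\chi^2_{|\mathbf{T}_k|}$ and the threshold $\tau'$, the $(\alpha,\beta)$-dependence washes out and the boundary reduces to $K=|\mathbf{T}_k|\ln M\,(1+o(1))$. Balancing this $R$-scale penalty against the relative-entropy shift through the same identification between $K$ and $\lim M\frac{1}{2\ln 2}D$ used in Theorem~\ref{thm:complexity} yields $\lim \frac{M}{\log M}\frac{1}{2\ln 2}D(P_t\|P_{t_0})=\frac{|\mathbf{T}_k|}{2}$; exhibiting detectable alternatives that approach this value gives the stated $\le$ in \eqref{eq:kknown}.

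The main obstacle is the uniform control of $\max_i\bigl(R_i-|\mathbf{T}_i|\ln M\bigr)$ over the \emph{infinite} list of models under $H_0$: Lemma~\ref{thm:chi2} furnishes only convergence in distribution for each fixed $i$, whereas bounding the supremum requires tail estimates $P\bigl(R_i>|\mathbf{T}_i|\ln M+\tau'\bigr)$ that are summable and uniform in $i$. I would close this gap with finite-sample exponential-family concentration for the empirical mean $\hat{\mathbf{t}}_i$, which makes each $R_i$ sub-exponential with a tail of order $M^{-|\mathbf{T}_i|/2}$ at the penalized level, so that $\sum_i M^{-|\mathbf{T}_i|/2}=o(1)$ by the strictly increasing $|\mathbf{T}_i|$; alternatively one truncates to $i\le K_M$ with $K_M\to\infty$ slowly and bounds the discarded tail. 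A secondary, routine point is verifying that the negligible-at-leading-order terms ($\log^*(i)$, the log-sum-exp correction, and the $O(1)$ $\chi^2$ quantiles) indeed contribute only to the $o(1)$ in $K=|\mathbf{T}_k|\ln M(1+o(1))$, which is why the final constant depends on $|\mathbf{T}_k|$ alone and not on $\alpha,\beta$.
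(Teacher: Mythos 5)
Your proposal is correct and follows essentially the same route as the paper's own proof: a union bound with Chernoff/$\chi^2$-tail control over the model list to force $P_{FA}\to 0$ (the paper truncates to $i\le K(M)$ and sums terms $\bigl(\sqrt{\ln M}/(\sqrt{2}\,M)\bigr)^{i}$, which matches your summable $M^{-|\mathbf{T}_i|/2}$ tails), and, for detection, the same decomposition of the statistic for model $k$ into a shift term converging to $K$ under the scaling $\frac{M}{\log M}D(P_{t_M}\|P_{t_0})\to K$ plus a $\chi^2_{|\mathbf{T}_k|}$ fluctuation from Lemma \ref{thm:chi2}, balanced against the penalty $\frac{|\mathbf{T}_k|}{2}\log M$. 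Your two additions — the log-sum-exp equivalence of the weighted test \eqref{eq:WMDL} and the collapse of the maximum at $i=k$ (the paper only needs the weaker lower bound $P_D\ge P(\text{model }k\text{ fires})$) — are refinements of the same argument rather than a different approach.
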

\begin{proof}
Let us assume we consider statistics $\mathbf{T}_i$ for $i\leq K(M)$ where $K(M)$ is an increasing
function of $M$.
We can bound the false alarm probability by the union bound
\begin{align}
	P_{FA}\leq\sum_{i=1}^{K(M)}P\left(
	  -\log P(\mathbf{x^M})+\log P_{T_i}(\mathbf{x}^M)
	  \geq \tau+\frac{|\mathbf{T}_i|}{2}\log M\right)
	 \label{eq:PFAproof}
\end{align}
According to Lemma \ref{thm:chi2} $
	  2\ln 2(-\log P(\mathbf{x^M})+\log P_{T_i}(\mathbf{x}^M))$
is approximately $\chi^2$. We will first evaluate
(\ref{eq:PFAproof}) under some simplifying assumptions:
1)  $
	  2\ln 2(-\log P(\mathbf{x^M})+\log P_{T_i}(\mathbf{x}^M))$ is
exact $\chi^2$, 2) $|\mathbf{T}_i|=i$, 3) $\tau=0$. We
can then use the Chernoff bound on (\ref{eq:PFAproof}) to
get
\begin{align*}
	P_{FA} &\leq \sum_{i=1}^{K(M)}\exp\left(- i \ln M\right)
	\left(\frac{i\ln M}{2i}\right)^{i/2} \nonumber\\
	&= \sum_{i=1}^{K(M)}
	\left(\frac{\sqrt{\ln M}}{\sqrt 2 M}\right)^{i}\nonumber\\
	&\leq \frac{\sqrt{\ln M}}{\sqrt 2 M}\frac 1{1-\frac{\sqrt{\ln M}}{\sqrt 2 M}}
\end{align*}
We notice that $P_{FA}\to 0$ not matter what $K(M)$ is, that is, at some point it will be less than the given $\alpha$. Of course
the $\chi_2$ approximation is not exact. However, we
can always find some $K(M)\to\infty$ so that the $\chi_2$ approximation
is good enough so that still 
$P_{FA}\to 0$. If $|\mathbf{T}_i|>i$ it can only lead to faster
decrease towards zero. Finally, since we only require
$P_{FA}\leq\alpha$, we could decrease $\tau$ below zero. The upside is that we can ensure $P_{FA}\leq\alpha$ for sufficienly
large $M$, even if the assumptions are not satisfied.

Consider a sequence of distributions $P_{t_M}(\mathbf{x})=P_{T_k}(\mathbf{x};{\mathbf{T}_k}=\mathbf{t}_M)$ that
satisfy
\begin{align}
  \lim_{M\to\infty}\frac{M}{\log M}D(P_{t_M}\|P_{t_0}) &= K \label{eq:MDlimitMDL}
\end{align}
for some constant $K$. For a maximum entropy distribution, we have
\begin{align*}
  D(P_{t_M}\|P_{t_0})&=(\boldsymbol{\lambda}-\boldsymbol{\lambda}_0)^T
  E_{\boldsymbol{\lambda}}[{\mathbf{T}}(\mathbf{x}]+(A(\boldsymbol{\lambda}_0)-A(\boldsymbol{\lambda}))
\end{align*}
In order for (\ref{eq:MDlimitMDL}) to be satisfied, we must then
have 
\begin{align}
	\lim_{M\to\infty}\frac M{\log M}(\boldsymbol{\lambda}-\boldsymbol{\lambda}_0)=\mathbf{v} \label{eq:MDlimit2x}
\end{align}
for some vector $\mathbf{v}$.
 Now write
\begin{align*}
  R &= 2\ln 2\left(-\log P_{t_0}(\mathbf{x}^M)+\log P_{t_M}(\mathbf{x}^M)\right.\nonumber\\
  &\left.-\log P_{t_M}(\mathbf{x}^M)+\log P_{\hat {\mathbf{t}}}(\mathbf{x}^M)\right) 
\end{align*}
For the first term we have
\begin{align*}
 \MoveEqLeft \frac 1{\log M}\left(-\ln P_{t_0}(\mathbf{x}^M)+\ln P_{t_M}(\mathbf{x}^M)\right)\nonumber\\
  &= \frac 1{\log M}\left((\boldsymbol{\lambda}-\boldsymbol{\lambda}_0)^T
  \sum_{i=1}^MT(\mathbf{x}_i)+M(A(\boldsymbol{\lambda}_0)-A(\boldsymbol{\lambda}))\right) \nonumber\\
  &=\frac M{\log M}(\boldsymbol{\lambda}-\boldsymbol{\lambda}_0)^T
  \frac 1 M\sum_{i=1}^MT(\mathbf{x}_i)+\frac M{\log M}(A(\boldsymbol{\lambda}_0)-A(\boldsymbol{\lambda})) \nonumber\\
  &\stackrel{P}{\to} K
\end{align*}
because of the law of large numbers and (\ref{eq:MDlimit}) and
(\ref{eq:MDlimit2x}).
At the same time
\begin{align*}
  2\ln 2\left(-\log P_{t_M}(\mathbf{x}^M)+\log P_{\hat t}(\mathbf{x}^M)\right)\stackrel{D}{\to}\chi_k^2
\end{align*}
The detection probability is
\begin{align*}
  P_D \geq P\left(R\geq \frac{|\mathbf{T}_k|}{2}\log M\right)
\end{align*}

Then from
Slutsky's theorem
\begin{align*}
  R&\stackrel{D}{\to}\chi^2_k+K
\end{align*}
Thus
\begin{align}
    \lim_{M\to\infty}P_D&=F_{\chi^2_k}(\tau-K) \label{eq:PDpx}
\end{align}
The theorem now follows by solving (\ref{eq:PFAp}) and (\ref{eq:PDpx}) with respect to $K$ for given $\alpha$ and $\beta$.

\end{proof}

While the above gives some indication that coding works,
the
most meaningful limit is $k\to\infty$ while
$M$ is fixed or $M\ll k$ (if $k$ is fixed
and $M\to\infty$ one can just use the most
complex model without much loss). Not all models
allow $k\ll M$, and we will therefore limit the
analysis to a specific case in the following section.

\subsection{Histogram}\label{sec:histogram}
In this section we will show theoretically the advantage of the coding
approach for combining statistics, limiting ourselves to scalar data for analytical tractability.
We consider the case with the default distribution $P$
uniform over $[0,1]$. Any one-dimensional problem can be transformed
into this by transformation with the CDF \cite{CoverBook}: it is well known that for a continuous random variable
$X$ with CDF $F_X$, 
$U=F_X(X)$ has a uniform distribution on $(0,1)$. We will see later that
such transformations are essential for working with complex distributions.
Thus, this is a  general one-dimensional problem.

We use the following statistic: we divide the interval $[0,1]$ into
$m$ equal length subintervals, and count the number of samples in each
interval. The corresponding maximum entropy distribution is the uniform
distribution over each subinterval. This is of course the histogram of
the data. One notices that the default distribution is the histogram
with $m=1$, so this is a good sequence of statistics for
this problem according to the theory in Section \ref{sec:methodology}.

Let $M\hat p_j$ be the number of samples in the $j$-th interval.
The maximum entropy distribution then is
\begin{align*}
  f(x)=\hat p_{q(x)}m
\end{align*}
where $q(x)$ is the interval that contains $x$.
We need to use this distribution to code $x^M$. Let $q^M$
be the quantization of $x^M$. The coding can then be
done as follows: first transmit $\hat p_j$ 
(the type of $q^M$), and then which specific 
sequence $q^M$ is in that type class; it is now known
in which interval $x_i$ is, and only the specific value
has to be transmitted with $-\log m$ bits. We can then
write the total codelength as
\begin{align}
  \hat L&=\hat L_q-M\log m
   \label{eq:hatL}
\end{align}
where $\hat L_q$ is the codelength to encode $q^M$ with a universal
coder (e.g., as outlined above). If $m\ll M$ this is \cite{Shamir06}
\begin{align*}
  \hat L_q&=MH(\hat p)+\frac{m-1}{2}\log M+O\left(\frac 1 M\right)
\end{align*}

The question is how to choose $m$. It is clear that it is necessary to
have unbounded $m$ to catch arbitrary deviations from the uniform
distribution, and the following theorem shows this is also sufficient to some extent
\begin{theorem}\label{thm:MDLuniversal}
	Suppose that the test data was generated by a continuous distribution. 
	Then the
	histogram detector  satisfies $P_{FA}\to 0$ and
	$P_D\to 1$ as $M\to\infty$ with suitable choice of
	$m\to\infty$.
\end{theorem}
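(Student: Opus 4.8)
The plan is to reduce the test to a statement about the discretized relative entropy and then treat false alarm and detection separately, exploiting that for the uniform default $-\log P(\mathbf{x}^M)=0$ whenever the data lie in $[0,1]$. Writing $u_m$ for the uniform distribution over the $m$ bins and using $\log m-H(\hat p)=D(\hat p\|u_m)$, the histogram codelength (\ref{eq:hatL}) becomes $\hat L=-M\,D(\hat p\|u_m)+\tfrac{m-1}{2}\log M$, so by (\ref{eq:MDLtest}) (dropping the negligible $\log^*$ term) the detector declares OOD precisely when
\begin{align*}
  M\,D(\hat p\|u_m) > \tfrac{m-1}{2}\log M+\tau .
\end{align*}
First I would fix a schedule $m=m(M)\to\infty$ that grows slowly, specifically $m(M)\log M=o(M)$ (for instance $m=\lfloor\sqrt M\rfloor$); the whole proof then amounts to showing that this single choice forces $P_{FA}\to0$ and $P_D\to1$.

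For the false alarm the data is drawn from $P=U[0,1]$, so $\hat p_j\to 1/m$ and Lemma~\ref{thm:chi2} gives $R:=2\ln2\,M\,D(\hat p\|u_m)\stackrel{D}{\to}\chi^2_{m-1}$. Converting the threshold and using $\ln2\cdot\log M=\ln M$, the false-alarm event becomes $\chi^2_{m-1}>(m-1)\ln M+O(1)$. I would then apply the Chernoff bound for $\chi^2_k$ exactly as in the proof of Theorem~\ref{thm:complexity}, which yields
\begin{align*}
  P_{FA}\le\Big(\tfrac{e\ln M}{M}\Big)^{(m-1)/2}\longrightarrow 0 .
\end{align*}
As in that proof, the subtlety is that $R$ is only asymptotically $\chi^2$; since we are free to let $m$ grow as slowly as we wish, we choose $m(M)$ so that the $\chi^2$ tail approximation stays accurate enough for this bound, and $\tau$ may be lowered to absorb any slack.

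For detection the data comes from an arbitrary continuous $G\neq P$. If $G$ places mass outside $[0,1]$, then with probability tending to one some sample falls outside $[0,1]$, the default codelength $-\log P(\mathbf{x}^M)$ is infinite, and OOD is declared trivially. Otherwise $G$ has a density $g\not\equiv 1$ on $[0,1]$, so $\hat p_j\stackrel{p}{\to}p_j^{\ast}=\int_{I_j}g$, and by the Lebesgue differentiation theorem the discretized divergence satisfies
\begin{align*}
  D(p^{\ast}\|u_m)=\sum_j p_j^{\ast}\log\big(m\,p_j^{\ast}\big)\;\longrightarrow\;\int_0^1 g\log g\,dx=D(G\|P)>0
\end{align*}
as $m\to\infty$. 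Hence for all large $m$ we have $D(\hat p\|u_m)\ge\tfrac12 D(G\|P)$ with high probability, so $M\,D(\hat p\|u_m)$ grows linearly in $M$ while the penalty $\tfrac{m-1}{2}\log M=o(M)$ by our choice of $m(M)$; the detection inequality therefore holds with probability tending to one, giving $P_D\to1$.

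The main obstacle is the joint calibration of $m(M)$: $m$ must be pushed to infinity so that the histogram resolves the deviation of an arbitrary continuous $G$ from uniform (guaranteeing $D(p^{\ast}\|u_m)$ is bounded away from zero), yet $m\log M$ must stay $o(M)$ so that the complexity penalty is overwhelmed by the linear growth of $M\,D(G\|P)$, and simultaneously $m$ must grow slowly enough that the $\chi^2$ tail estimate underlying the false-alarm bound remains valid. Checking that one schedule such as $m=\lfloor\sqrt M\rfloor$ meets all three requirements at once — in particular controlling the fluctuations of $D(\hat p\|u_m)$ around $D(p^{\ast}\|u_m)$ while $m$ itself grows — is the technical crux; everything else is a direct application of Lemma~\ref{thm:chi2} and the Chernoff estimate already used for Theorem~\ref{thm:complexity}.
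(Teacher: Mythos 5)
Your proposal is correct in substance, but it takes a genuinely different route from the paper's own proof. The paper argues qualitatively and with a \emph{fixed} number of bins: if the alternative density differs from the uniform density at one point, continuity gives a whole neighborhood where they differ, so a sufficiently fine but \emph{finite} histogram already has a bin with non-uniform probability; hence the discretized divergence is strictly positive, and the law of large numbers drives the empirical divergence to $0$ under $H_0$ and to that positive constant under $H_1$, while the per-sample penalty $\frac{m-1}{2M}\log M$ vanishes. The growth $m\to\infty$ enters only implicitly, because the alternative is unknown and $m$ must eventually exceed the finite resolution each alternative requires. You instead commit to an explicit schedule $m=\lfloor\sqrt M\rfloor$ and quantify both errors: a $\chi^2$/Chernoff tail for $P_{FA}$ (the computation you attribute to Theorem \ref{thm:complexity} actually appears in the unnumbered coding theorem that follows it, but it is the same bound) and convergence of the discretized divergence to $D(G\|P)$ for $P_D$. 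Your route buys explicit rates, a concrete schedule, and coverage of alternatives with mass outside $[0,1]$, which the paper silently ignores; the paper's route buys immunity to exactly the two uniformity problems you flag as the crux, since with fixed $m$ neither the validity of the $\chi^2$ tail nor the fluctuation of $D(\hat p\|u_m)$ around $D(p^\ast\|u_m)$ is an issue.

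Both of your acknowledged gaps are patchable. For detection, either apply McDiarmid's inequality to the plug-in entropy $H(\hat p)$ (bounded differences of order $\frac{\log M}{M}$, combined with its negative bias, give $D(\hat p\|u_m)\ge D(p^\ast\|u_m)-O(\log M/\sqrt M)$ with high probability), or, more cleanly, take nested dyadic bins: monotonicity of relative entropy under coarsening gives $D(\hat p^{(m)}\|u_m)\ge D(\hat p^{(m_0)}\|u_{m_0})$ for any fixed $m_0$ dividing $m$, which reduces your detection step to the paper's fixed-$m$ argument and sidesteps the fluctuation problem entirely. For the false alarm, your "choose $m(M)$ growing slowly enough that the $\chi^2$ approximation is accurate" is the same move the paper itself makes, so you are no worse off. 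One small correction: an atomless distribution need not possess a density, so your appeal to Lebesgue differentiation requires either the implicit assumption (shared by the paper) that $G$ has a continuous density, or the added remark that for singular $G$ the discretized divergence diverges, which only makes detection easier.
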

\begin{proof}
    Let $x$ be a point where $P(x)\neq \widehat P(x)$, lets say $P(x)< \widehat P(x)$.
Since $P, \widehat P$ are continuous, there exists a neighborhood $N(x)$ so that
$\forall x'\in N(x):P(x')< \widehat P(x')$.
It then follows that $\int_{N(x)} P(x')dx'<\int_{N(x)} \widehat P(x')dx'$.
We now choose $K$ so large that some histogram bin, call it $i$, is completely within $N(x)$. Then $p_i \neq \tilde p_i$, where $\tilde p_i$ is the probability of the $i$-th bin according to $\widehat P$. 
It follows that $D(p\|\tilde p)>0$. Since $D(p\|\hat p)\stackrel{D}{\to} 0$ under
$H_0$ and $D(p\|\hat p)\stackrel{D}{\to} D(p\|\tilde p)>0$ under $H_1$, the theorem follows.
\end{proof}
Still, this leaves open how exactly to choose $m$. Without some type of
model selection (e.g., MDL) the only choice is to let $m$ be some
function of $M$. The
issue here is that there are contradictory requirements on $m$. First,
as we have to let $m\to\infty$, for large $M$ we get a large $m$,
which is not always advantageous according to Theorem \ref{thm:complexity}. 
To
detect smooth distributions one requires $m\ll M$, while to detect
concentrated distributions one would like $m>M$, as argued below.
 A traditional
MDL approach for selecting $m$ (e.g., \cite{HallHannon88,RissanenSpeedYu92})
would choose $m$ to get a good histogram approximation, e.g., it would
not give $m>M$, which is desired for OOD.
On the other hand, with (\ref{eq:MDLtest}) or (\ref{eq:WMDL}) there are
no restrictions on $m$ -- we will see that it is indeed possible to
have an infinite sum in (\ref{eq:WMDL}). One advantage of the coding
approach is that it allows "degenerate" models with $m>M$.

In order to put this in a theoretical framework, we consider the case of a very concentrated alternative distribution.
To detect this one does not need a large number of samples. If one
has a few samples close together, this is a strong indication that
the distribution is not uniform. This can be detected by a histogram
with small bin size, i.e., large $m$. For theoretical analysis we will consider the
extreme case of this, where the alternative distribution has a discontinuous
CDF (i.e., discrete or mixed). There is then a good chance that two
samples are identical, and that is an definite indication that the
distribution is not uniform. We will show that the coding approach
is able to detect this.

We will first argue that this is not possible without the coding combining
 in the sense that the false alarm probability
is one no matter how large $\tau$. Suppose that data is from the default
distribution (i.e., uniform) and $m$ is so large that all
samples are in different bins. Then the negative
log-likelihood according to (\ref{eq:hatL}) is
\begin{align*}
  \hat L &= M\log M-M\log m
\end{align*}
(when $m\gg M$ a more efficient coder is to transmit the
sequence $q^M$ uncoded)
which is unbounded (negative) as $m \to\infty$, i.e.,
for some $m$, $\hat L+\tau <0$ no matter how large $\tau$ or $M$. Thus, without coding one has to limit
$m<\infty$, and then one cannot detect identical samples for finite $M$.

On the other hand, with well designed coding we get the following
result
\begin{theorem}
	For a histogram detector with unbounded number of bins there is 
	a universal coder so that
\begin{enumerate}
	\item For the detector using (\ref{eq:MDLtest})
	\begin{itemize}
  \item For sufficiently large $\tau$ and/or $M$, the false alarm
  probability can be made arbitrarily small.
  \item If $x^M$ has at least \emph{three} non-unique samples (a sample repeated three times
  or two values repeated once), $x^M$ will be classified
  as OOD with probability one.
\end{itemize}
	\item For the  weighted detector (\ref{eq:WMDL})
	\begin{itemize}
  \item For sufficiently large $\tau$ and/or $M$, the false alarm
  probability can be made arbitrarily small.
  \item If $x^M$ has at least \emph{two} non-unique samples  $x^M$ will be classified
  as OOD with probability one.
\end{itemize}

\end{enumerate}

\end{theorem}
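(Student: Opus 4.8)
The plan is to collapse both detectors onto a single scalar, the limiting histogram codelength, and to show that this limit is controlled entirely by the multiplicity pattern of the repeated samples. Since $P$ is uniform on $[0,1]$, $-\log P(\mathbf{x}^M)=0$, so by (\ref{eq:hatL}) both tests reduce to deciding whether the histogram codelength together with the model and index penalties can be pushed below $-\tau$. I would instantiate the universal coder for $q^M$ by a uniform prior over the $\binom{M+m-1}{m-1}$ compositions of $M$ into $m$ cells (the type), $\log\frac{M!}{\prod_u n_u!}$ bits for the sequence within the type class, and $-\log m$ bits to refine each sample inside its cell. Writing $n_u$ for the multiplicities of the \emph{distinct} values in $\mathbf{x}^M$, the central computation is that the two $M\log m$ contributions cancel:
\begin{align*}
  \hat L(m) &= \log\frac{M!}{\prod_u n_u!} - M\log m + \log\binom{M+m-1}{m-1} \\
  &= -\sum_u\log n_u! + \delta(m),
\end{align*}
where $\delta(m)=\sum_{k=0}^{M-1}\log(1+k/m)\ge 0$ and $\delta(m)\to 0$ as $m\to\infty$. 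Thus $\hat L(m)$ decreases to the finite limit $-\sum_u\log n_u!$, which depends only on the repeat pattern. This cancellation is the technical core on which everything rests: it is precisely what replaces the unbounded $M\log M-M\log m$ of the un-coded detector by a bounded quantity and makes the statement possible.

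For the false-alarm bullets I would use that a continuous generating distribution produces distinct samples almost surely, so under the null $\sum_u\log n_u!=0$ and $\hat L(m)\to 0$; for distinct data $\hat L(m)=\delta(m)\ge 0$ at large $m$, so the tail contributes nothing spurious, while for moderate $m$ the fluctuations of $2\ln 2\,\hat L(m)$ are governed by the $\chi^2$ approximation of Lemma \ref{thm:chi2} together with the union-bound/Chernoff estimate already used in Theorem \ref{thm:complexity}, and the summability of $2^{-\log^* m}$ (Kraft) controls $m\to\infty$. Hence for large $\tau$ and/or $M$ both false-alarm probabilities can be made arbitrarily small, in sharp contrast with the un-coded case where some large $m$ forces $P_{FA}=1$.

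For the minimum rule (\ref{eq:MDLtest}), since $\hat L(m)$ decreases to $-\sum_u\log n_u!$, the achievable codelength equals this limit up to a vanishing index-coding overhead, so the test fires precisely when $\sum_u\log n_u!$ exceeds the threshold $\tau$. Evaluating $\sum_u\log n_u!$ gives $\log 2!=1$ bit for a single repeated pair, $\log(2!\,2!)=2$ bits for two pairs, and $\log 3!=\log 6$ bits for a triple; hence a threshold in $[1,2)$ detects every configuration with at least three non-unique samples while leaving a lone pair undetected. This yields part (1).

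For the weighted rule (\ref{eq:WMDL}) I would invoke the general inequality $-\log\sum_m 2^{-\ell(m)}\le \min_m\ell(m)$: because infinitely many $m$ realize values arbitrarily close to the limit, the weighted sum strictly exceeds its largest term, so the weighted codelength $-\sum_u\log n_u!-\log C'$ with $C'=\sum_m 2^{-\delta(m)-\log^* m}$ sits strictly below the minimum-rule value. Calibrating the index code so that this weighting gain lies between zero and one bit lowers the effective threshold just enough to capture a single pair ($1$ bit) while still not firing on distinct data ($0$ bits), which is exactly the gap between the two- and three-sample conditions of parts (2) and (1). The main obstacle I expect is this joint calibration: designing the prior over types and the integer code so that the tail $m\to\infty$ is simultaneously summable enough to keep $P_{FA}$ small yet loose enough that the weighting gain stays inside the narrow $(0,1)$-bit window that separates the two thresholds.
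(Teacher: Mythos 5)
Your coder is valid and the cancellation identity $\hat L(m)=\delta(m)-\sum_u\log n_u!$ is correct, but the detection logic you build on it fails, and the failure is structural. The key error is the claim that the index-coding overhead is ``vanishing.'' To bring $\hat L(m)$ within $O(1)$ of its limit you need $\delta(m)=\sum_{k=0}^{M-1}\log(1+k/m)$ small, hence $m\gtrsim M^2$; but there the $\log^*m$ charge in (\ref{eq:MDLtest}) is about $2\log M$. In fact a short computation gives $\min_m\left[\delta(m)+\log^* m\right]\ge 2\log M-O(1)$ (for $m\ge M/2$ one has $\delta(m)\ge \tfrac{M^2}{8m}\log_2 e$ and $\log^*m\ge\log m$, minimized near $m=M^2/8$; smaller $m$ give $\delta(m)\ge M/2$). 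So with your coder the minimum-rule statistic is $\approx 2\log M-\sum_u\log n_u!$, which is positive for all but tiny $M$: a triple ($\log 6$ bits) or two pairs ($2$ bits) is never detected once $M\ge 5$, for any $\tau>0$. The same effect kills your weighted-rule calibration: the ``gain'' $G(M)=-\log\sum_m 2^{-\delta(m)-\log^*m}$ satisfies $G(M)\to\infty$ as $M\to\infty$ (for each fixed $m$, $\delta(m)\to\infty$, and the terms are dominated by the summable $2^{-\log^*m}$), so it cannot be held inside a $(0,1)$-bit window uniformly in $M$, and a single pair's one-bit saving is swamped. Even setting the overhead aside, your argument makes detection contingent on $\tau$ lying in a fixed narrow window such as $[1,2)$, whereas the theorem requires detection with probability one to coexist with taking $\tau$ and/or $M$ large to drive $P_{FA}$ down; a coder whose gap between null data and repeated data is a bounded number of bits can never prove that.

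The missing idea is that repeated samples must produce codelength savings that are \emph{unbounded in $m$}, not a fixed number of bits. Your uniform prior over compositions pays $\log\binom{M+m-1}{m-1}\approx M\log m-\log M!$ for the type regardless of the data, so repeats only ever recover $\sum_u\log n_u!$ bits; this makes the codelength bounded for \emph{all} data, which is exactly the wrong property (boundedness is needed only under the null). The paper's coder instead encodes the set of $b$ occupied bins, costing $\log\binom{m}{b}\approx b\log m$ plus $m$-independent terms, so $\hat L\approx\log^*m-(M-b)\log m+\kappa(M)$. Then for $b\le M-2$ (at least three non-unique samples) the codelength tends to $-\infty$, beating every $\tau$, which gives part (1); for $b=M-1$ (a single pair) the pointwise codelength stays positive because $\log^*m-\log m=\log\log m+\cdots\to\infty$, yet the weighted sum in (\ref{eq:WMDL}) contains $\sum_m m\,2^{-\log^*m}=\sum_m\left(\log m\cdot\log\log m\cdots\right)^{-1}=\infty$, which gives part (2). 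This pointwise-versus-in-sum comparison of $(M-b)\log m$ against $\log^*m$ is precisely what separates the two-sample and three-sample conditions, and it is unreachable from your coder.
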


\begin{proof}
The paper \cite{Shamir06}
has a coder for $q^M$ for $m>M$, but for $m\gg M$ a more
efficient scheme is possible. The encoding is
done as follows
\begin{enumerate}
  \item The encoder encodes the number $b$ of
  bins that have at
  least one sample; this requires $\log M$ bits.
  \item The encoder encodes which bins have at least
  one sample. This requires $\log\binom{m}{b}$ bits.
  \item The encoder transmits which bin each sample is in;
  this is now transmission of $M$ samples from
  a $b$ alphabet source.
\end{enumerate}
The only difference from \cite{Shamir06} is that it makes it
\emph{explicit} that there might be less than $M$ non-zero bins. Since the decoder then knows
exactly the number of non-zero bins, if $b=M$ the encoder does not have to
encode the distribution over bins, which is $\frac 1 M$ for all
bins. But in the coding scheme of \cite{Shamir06} the encoder
always has to encode the distribution over bins
since the decoder does not know the number of
non-zero bins from the first step. The conclusion is
that the coding schemes has the same redundancy and regret.

The total codelength is
\begin{align*}
  \hat L  &= \log^*m+\log\binom{m}{b}-M\log m + \kappa(M) \nonumber\\
     &\geq \log^*m+mH\left(\frac b m\right)
     +\frac 1 2\log\frac{m}{8b(m-b)}-M\log m + \kappa(M) \nonumber\\
     &=\log^*m+b\log m-M\log m+ \kappa(M)
\end{align*}
Here $\kappa(M)$ denote terms that are bounded in $m$,
e.g. $O(1)$.
We see that if $b\leq M-2$ this can be negative
for sufficiently large $m$ no matter what is $\kappa(M)$,
but is always strictly positive for $b> M-2$. We next calculate
$P(b\leq M-2)$, as follows
\begin{itemize}
  \item We pick $M-2$ bins. This can be done $\binom{m}{M-2}$ ways.
  \item We distribute the $M$ samples in the $M-2$ bins,
  where some are allowed to be empty. This can be done
  $\binom{k+k-2-1}{k-3}$ ways.
\end{itemize}
Then
\begin{align*}
  P(b\leq M-2) &= \frac{m!}{(m-M+2)!(M-2)!}\cdot\frac{(2M-3)!}{M!(M-3)!}\cdot \frac{1}{m^M} \nonumber\\
  &= 1\cdot\left(1-\frac 1 m\right)\cdots \left(1-\frac {M-2} m\right)\frac{1}{m^2}\cdot\frac{(2M-3)!}{M!(M-3)!} \nonumber\\
  &\leq \frac 1{m^2}f(M)
\end{align*}
That means that the union bound over
 $m$ is finite, which again means
that the false alarm probability can be made arbitrary small
for sufficiently large $\tau$ or $M$.

Consider instead the MDL-weighted coder. The criterion is
\begin{align*}
  \sum_{m=m_0}^\infty 2^{-\hat L}
  &=\sum_{m=m_0}^\infty m^{M-b(m)}2^{-\log^* m}
  >C
\end{align*}
where $C$ is some constant the depends on $\tau$ and $M$. Under the
default uniform distribution $b(m)=M$ for sufficiently large $m\geq m_0'$ with
probability 1. But the sum $\sum_{m=m_0'}^\infty 2^{-\log^* m}$ is finite,
and for $C$ large enough the false alarm probability can be made
arbitrarily small.

On the other hand, if at least one value is repeated, $b(m)<M$ for
arbitrary large $m>m_0'$. But then the sum
\begin{align*}
  \sum_{m=m_0}^\infty m^{M-b(m)}2^{-\log^* m} \geq 
  \sum_{m=m_0'}^\infty m2^{-\log^* m}
  =\sum_{m=m_0'}^\infty 2^{-\log^* m+\log m}=\infty
\end{align*}
Therefore, no matter how large $C$ is, the data will be detected
as anomalous.

\end{proof}
The theorem also shows that the weighted detector (\ref{eq:WMDL}) 
can be strictly better than the "model selection" detector 
(\ref{eq:MDLtest}); in terms of codelength (\ref{eq:WMDL})
was already known to be better. At the same time, the proof of the theorem is
based on a carefully designed coder, one that is optimum in the
minimax coding sense. This indicates that using better coding in general -- better coding meaning
coders with shorter codelength -- results in better detection.

\section{Transformations}
One important detail in Martin-L\"{o}f-Kolmogorov randomness detection
is that the universal Turing machine implementing Kolmogorov complexity
has as input also the default distribution $P$. In many cases this
disappears asymptotically, but not in more complicated setups \cite{LiVitanyi}. Intuitively,
this also makes sense: If $P$ is very complicated, as universal coder
without any knowledge of $P$ would  have to estimate $P$ before
it can start detecting deviations from $P$ -- requiring a large number of samples.
On the other hand, it is difficult
to use $P$ in a universal source coder. Furthermore, in real-world implementation,
$P$ is not known, but found through machine learning. One option for using
knowledge of $P$ is 
to modify the ML distribution through some sort of online learning. However, this
is not very feasible, and does not fit into the statistics framework
we use. Therefore, the approach we take is to transform any distribution
into a standard distribution, and then apply the coding approach. For
known $P$ this is alway possible, as follows

\begin{lemma}[\cite{SabetiHost17}]
\label{Generate.thm}For any continuous random variable $\mathbf{X}$
there exists an $n$-dimensional uniform random variable $\mathbf{U}$,
so that $\mathbf{X}=\check{\mathbf{F}}^{-1}(\mathbf{U})$.
\end{lemma}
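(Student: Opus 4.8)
The plan is to prove this as a multivariate probability integral transform, realizing $\check{\bF}$ through the Rosenblatt transformation built from the chain-rule factorization of the joint density. First I would write the joint density as $p(\bx) = p(x_1)\prod_{i=2}^{n} p(x_i \mid x_1,\ldots,x_{i-1})$, which is legitimate because $\bX$ is continuous, and then define a map $\check{\bF}:\mathbb{R}^n \to [0,1]^n$ coordinate by coordinate through the conditional CDFs: set $U_1 = F_{X_1}(X_1)$ and, for $i>1$, set $U_i = F_{X_i \mid X_1,\ldots,X_{i-1}}(X_i \mid X_1,\ldots,X_{i-1})$.

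The key step is to show that $\bU = (U_1,\ldots,U_n)$ is uniform on the unit cube. I would establish this by applying the scalar integral transform conditionally: holding the conditioning variables fixed, each $U_i$ is the image of a continuous one-dimensional CDF and is therefore uniform on $[0,1]$ with a law that does not depend on the conditioning values. Chaining these conditional statements from $i=1$ upward shows that the $U_i$ are mutually independent, each uniform, so that $\bU$ is uniform on $[0,1]^n$. I would then define $\check{\bF}^{-1}$ by sequential inversion, $x_1 = F_{X_1}^{-1}(u_1)$, then $x_2 = F_{X_2 \mid X_1}^{-1}(u_2 \mid x_1)$, and so on, so that substituting $\bU$ reconstructs $\bX$ term by term by construction.

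The main obstacle is ensuring the conditional CDFs are genuinely invertible: wherever a conditional density vanishes the corresponding CDF is flat and the ordinary inverse fails. I would handle this by using the generalized (quantile) inverse $F^{-1}(u) = \inf\{x : F(x) \ge u\}$ at each stage, and by observing that the set on which flatness causes ambiguity carries probability zero under $P$. Hence the identity $\bX = \check{\bF}^{-1}(\bU)$ holds $P$-almost surely, which is all that is needed for the subsequent coding argument. A secondary technical point is the measurability of the conditional CDFs, required for $\bU$ to be a well-defined random vector; this follows from the existence of regular conditional distributions for $\mathbb{R}^n$-valued random variables, so no extra hypotheses beyond continuity of $\bX$ are necessary.
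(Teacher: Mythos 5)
Your proof is correct and takes essentially the same route as the source of this lemma: the paper itself imports the statement from the cited reference without reproducing a proof, and there $\check{\mathbf{F}}$ denotes precisely the sequential conditional-CDF (Rosenblatt) map you construct, proved via conditional probability integral transforms with generalized inverses, so that $\mathbf{X}=\check{\mathbf{F}}^{-1}(\mathbf{U})$ holds almost surely. Your handling of flat conditional CDFs and measurability via regular conditional distributions fills in the same technical points, so nothing is missing.
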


For unknown $P$ we use some type of bidirectional generative networks,
described in more detail below.

This approach has several advantages
\begin{itemize}
  \item Knowledge of $P$ is utilized in the universal coder.
  \item Since the default distribution is always the same, a standard set
    of statistics can be used.
  \item Often small deviations from the default models can be
    captured by simple models, which is advantageous according to
    Theorem \ref{thm:complexity}.
\end{itemize}

A simple example is given by the histogram approach in
Section \ref{sec:histogram}. If $P$ is very complex (e.g.,
with high frequency content) and the data is generated 
 by a distribution very close to $P$, one would need
very large $m$ to detect this. On the other hand, if
$P$ is known, data can be transformed with the CDF, and
even $m=2$ might be able to detect the deviation.

\section{Multivariate Gaussian Default Model, $P$}\label{sec:Gaussian}
We consider the case when the default model is Gaussian with zero mean
and (known) covariance matrix $\boldsymbol{\Sigma}$; mostly we consider
the case $\boldsymbol{\Sigma}=\mathbf{I}$. The aim is to find some statistics
that captures deviation from this model well. As mentioned in ??, the best
statistics have the default model as maximum entropy distribution for some
value of the statistic. The most obvious statistic is of course the
mean and covariance, ${\mathbf{T}}(\mathbf{x})=(\mathbf{x},\mathbf{x}\mathbf{x}^T)$;
the corresponding maximum entropy distribution is Gaussian $\mathcal{N}(\hat{\boldsymbol{\mu}},\hat{\boldsymbol{\Sigma}})$.
However, this is a high complexity model, which should not be used alone
according to Theorem \ref{thm:complexity}. We therefore consider lower
complexity models by specifying a sparse covariance matrix by requiring
${\boldsymbol{\Sigma}}^{-1}_{i,j}=0$ for some coordinates $(i,j)\in J$
and putting ${\boldsymbol{\Sigma}}_{i,j}=\hat{\boldsymbol{\Sigma}}_{i,j}$
for $(i,j)\notin J$. There exists a unique positive definite matrix $\mathbf{S}$ 
satysfying these constraints, and the maximum entropy distribution is
the Gaussian distribution with covariance matrix $\mathbf{S}$ \cite{Dempster72}.
The method is called covariance selection \cite{Dempster72}.

The minimum size of the covariance statistic that gives a valid maximum
entropy distribution is the dimension of $\mathbf{x}$ (by only estimating
the diagonal elements), which can still be
high. We can also consider simpler statistics. One can start with 
$r^2=\mathbf{x}^T\mathbf{x}$, and then consider statistics ${\mathbf{T}}(r^2)$.
It is clear that the maximum entropy distribution corresponding to ${\mathbf{T}}(r^2)$
is uniformly distributed over an $n$-ball. Thus the maximum entropy
distribution is
\begin{align*}
  f(\mathbf{x}) &= \frac {\Gamma(n/2)}{\pi^{n/2}r^{n-2}}f_r(r^2)=\frac {\Gamma(n/2)}{\pi^{n/2}(r^2)^{n/2-1}}\exp\left(\lambda_0-1+\sum_{i=1}^m \lambda_i t_i(r^2)\right)
\end{align*}
For the default distribution, $r^2$ is $\chi^2$ distributed. As discussed
previously, it is advantageous to have the default distribution to be
a special case of the maximum entropy distribution. We therefore use
${\mathbf{T}}(r^2)=(r^2,\ln(r^2))$ giving a Gamma maximum entropy distribution,
\begin{align}
  f(\mathbf{x}) &= \frac {\Gamma(n/2)}{\pi^{n/2}(r^2)^{n/2-1}}
  \frac{\beta^\alpha}{\Gamma(\alpha)}(r^2)^{\alpha-1}
  \exp(-\beta r^2) \nonumber\\
  &= \frac {\Gamma(n/2)}{\pi^{n/2}}
  \frac{\beta^\alpha}{\Gamma(\alpha)}(r^2)^{\alpha-n/2}
  \exp(-\beta r^2)
\end{align}


The $r^2$ statistic detects deviations in the radial distribution of the
test data. This can be complemented by detecting deviations in directional
distribution. A natural statistic is ${\mathbf{T}}(\mathbf{x})=\frac{\mathbf{x}}{\|\mathbf{x}\|}$.
The corresponding maximum entropy distribution is the von Mises-Fisher
distribution (for unit vectors $\mathbf{u}$)
\begin{align*}
  f_d(\mathbf{u})=C(\kappa)\exp(\kappa\boldsymbol{\mu}^T\mathbf{u})
\end{align*}
where $C(\kappa)$ is a normalization constant, and
\begin{align*}
  \boldsymbol{\mu} & = E[\|\mathbf{u}\|] \nonumber\\
  \frac{I_{m/2}(\kappa)}{I_{m/2-1}(\kappa)} &= \|E[\|\mathbf{u}\|]\|
\end{align*}
The Gaussian distribution with $\boldsymbol{\Sigma}=\mathbf{I}$ gives
a von Mises-Fisher
distribution with $\kappa=0$.
The total distribution for $\mathbf{x}$ is then
\begin{align*}
  f(\mathbf{x}) &= f_r(\|\mathbf{x}\|^2)f_d\left(\frac{\mathbf{x}}{\|\mathbf{x}\|}\right)
  \frac 1{\|\mathbf{x}\|^{n-2}}
\end{align*}

One can think of it as follows. The (sparse) covariance statistic can
detect deviations in covariance, but not a deviation that has  a
covariance matrix $\mathbf{I}$, but a non-Gaussian distribution. The
$r^2$ statistic and directional statistic can detect deviations from a
Gaussian distribution. For example, if the components of $\mathbf{x}$
are IID, but not Gaussian, the covariance matrix is $\mathbf{I}$, but
$r^2$ is not $\chi^2$.

%

\subsection{OOD under multivariate Gaussian default model}
As outlined, the default model $P$ is assumed to be a known multivariate Gaussian distribution. The model for out-of-distribution data is also assumed to be Gaussian but with unknown covariance matrix $\mathbf{\Sigma}$. Without loss of generality, we assume that everything is zero mean.
We calculate CTW-based weighting criterion introduced in \eqref{eq:WMDL} to find if a batch of data $\mathbf{x}^M$ belongs to $P$ or is OOD. In order to compute \eqref{eq:WMDL}, we follow these steps:

\begin{enumerate}
\item Encode the data $\mathbf{x}^M$ with the known default model $P$. Therefore $L = -\log P(\mathbf{x}^M)$.

\item Encode the data $\mathbf{x}^M$ with universal multivariate Gaussian coder for all unique sparsity patterns obtained from covariance matrix estimation. This is described in Section \ref{sec:coding}.

\item Encode the data $\mathbf{x}^M$ with universal Gamma distribution to account for $r^2=\mathbf{x}^T\mathbf{x}$ statistics. This is described in Section \ref{sec:GammaCoding}.

\item Combine codelengths from step--2 and step--3 using CTW principle in \eqref{eq:WMDL} to get $\widehat{L}$. 

\item Given a threshold $\tau$, the data $\mathbf{x}^M$ is OOD if $\widehat{L}+\tau < L$.
\end{enumerate}



\subsection{Universal Multivariate Gaussian Coder}
\label{sec:coding}

A universal multivariate Gaussian coder was
proposed in \cite{abolfazli2021graph}; it is universal in the sense that it can be used to encode any multivariate Gaussian data. Our approach to finding the description length of a multivariate Gaussian model is based on characterizing the distribution by the sparsity pattern of the inverse covariance matrix, $\boldsymbol{\Sigma}^{-1}$. This sparsity pattern is known as the conditional independence graph, $G$, of the Gaussian. It can be found by using a number of structure learning methods such as graphical lasso (GLasso) \cite{friedman2008sparse}; these methods often use a regularization parameter, $\lambda$, to control for the sparsity of the solution. Each value of $\lambda$ is associated with a conditional independence graph $G$. Here, we want to combine codelength of unique models and as a consequence, we consider unique conditional independence graphs.




The codelength of the universal multivariate Gaussian coder is the sum of two components: 
 1) $L(G)$, the number of bits needed to describe a conditional independence graph $G$ which can be found using any graph coder described in~\cite{Abolfazli21ISIT}. 
     2) $L(\mathbf{x}^M|G)$, the number of bits to describe $\mathbf{x}^M$ using a multivariate Gaussian with conditional independence graph $G$. Any coding scheme that achieves the universal coding lower bound can be used. We chose predictive MDL \cite{Rissanen86} because it gives the actual codelength of data:

\begin{equation}
   L(\mathbf{x}^M|G) = -\sum_{i=1}^{M-1} \log_2 P\left(\mathbf{x}_{i+1} ; \hat{\theta}(\mathbf{x}_1,\ldots, \mathbf{x}_i)\right) \label{eq:predMDL}
\end{equation}
where $\hat{\theta}(\mathbf{x}_1,\ldots, \mathbf{x}_i)$ is the maximum likelihood estimate of the covariance matrix computed using the first $i$ samples, $\{\mathbf{x}_1, \ldots, \mathbf{x}_i \}$. The solution of the estimate is constrained such that the corresponding conditional independence graph is $G$. In \cite[Section 17.3]{hastie2017elements}, an iterative method was proposed to solve this constrained problem.

Algorithm~\ref{alg:mulgau} describes the implementation of the universal multivariate Gaussian coder.

\begin{algorithm}[h!]\caption{\textsc{Universal Multivariate Gaussian Coder}}
   \label{alg:mulgau}
\begin{algorithmic}[1]
\State {{\bfseries Input:} IID zero-mean multivariate Gaussian data $\mathbf{x}^M$, GLasso regularization parameters $\{\lambda_1,\ldots, \lambda_K\}$}
\State $\mathcal{G} = \{ \} $ \Comment{set of unique models}
\State $\mathcal{L} = \{ \} $ \Comment{set of computed codelengths}
\For {Each regularization parameter $\lambda \in \{\lambda_1,\ldots, \lambda_K\}$.}
	\State {Use GLasso with $\lambda$ to find $G_\lambda$}
    \If {$G_\lambda \notin \mathcal{G}$}
        \State $\mathcal{G} \gets \mathcal{G} \cup G_\lambda$
	    \State {Compute $L(G_\lambda)$ using any of graph coders described in \cite{abolfazli2021graph}}
	    \State {Compute $L(\mathbf{x}^M|G_\lambda)$ using predictive MDL in \eqref{eq:predMDL}}
        \State $L = L(G_{\lambda}) + L(\mathbf{x}^M|G_\lambda)$
        \State $\mathcal{L} \gets \mathcal{L} \cup L$
    \EndIf
\EndFor
\State {\bfseries Output:} $\mathcal{L}$
\end{algorithmic}
\end{algorithm}

\subsection{Universal Gamma Coder}
\label{sec:GammaCoding}
For ${\mathbf{T}}(r^2)$ statistics, we consider Gamma distribution as the maximum entropy distribution where $\chi^2$ is a special case of it. In order to encode $\mathbf{x}^M$ using Gamma distribution, we use predictive MDL where we use maximum likelihood to estimate the parameters of Gamma distribution (shape and scale) sequentially.

\subsection{Experiments on Synthetic Data}\label{sec:synthetic}
Although our approach, maximum entropy coding (MEC), is designed when data is normally distributed, the experiments show that it also performs well when the data comes from distributions that are marginally near-Gaussian.
We compared our approach (MEC) to $d-$dimensional KS test (\textup{ddKS}) method \cite{hagen2021accelerated}, a multi-dimensional two-sample KS test, for 6 different test scenarios of synthetically generated multivariate Gaussian and nearly-Gaussian data as described in Table~\ref{tab:synth_scenarios}.
For \textsc{Case-1\&2}, the distribution of both the default model, $P$ and the alternative model, $\widehat{P}$ are multivariate Gaussian. For \textsc{Case-3} to 6, the test data is generated from linear transformation of nearly-Gaussian distributions, $\mathbf{Ax}$, where $[\mathbf{A}_{ij}]$ is the transformation matrix.


For each test case, we performed OOD detection on a synthetically generated test dataset of size $M$. The test set, $\mathbf{x}^M$, is either generated from the default model, $P$ or from the alternative model, $\widehat{P}$. We repeated the experiment $1000$ times; in 500 experiments, data is drawn from default distribution and in 500 experiments, data is generated by alternative distribution. The AUROC for the six scenarios is shown in Table~\ref{tab:synth_results}. 
As it can be seen, our approach outperforms \textup{ddKS} method in all cases, even for the cases where the test data do not exactly come from Gaussian distributions.


\begin{table*}[h]
\centering
  \begin{threeparttable}[b]
  \caption{Scenarios for generating synthetic test data.}
   \begin{tabular}{lccc}
        \toprule
        & Parameters of Default model, $P$ & Parameters of Anomalous model, $\widehat{P}$    & Data generation\\
        \midrule
        \vspace{0.1cm}
        \textsc{Case--1} &   \begin{tabular}{@{}l@{}}$\mathbf{\Omega}_{ii} = 1, \mathbf{\Omega}_{i,i-1} = \mathbf{\Omega}_{i-1,i} = 0.45$ \\ $\mathbf{\Omega}_{16} = \mathbf{\Omega}_{61} = 0.45$\end{tabular}  &   $\mathbf{\Omega}_{ii} = 1, \mathbf{\Omega}_{i,i-1} = \mathbf{\Omega}_{i-1,i} = 0.45$   & $\mathbf{X} \sim \mathcal{N}(\mathbf{0}, \mathbf{\Omega}^{-1})$ \\
        
        \vspace{0.1cm}
        \textsc{Case--2} &    \begin{tabular}{@{}l@{}}$\mathbf{\Omega}_{ii} = 1, \mathbf{\Omega}_{i,i-1} = \mathbf{\Omega}_{i-1,i} = 0.45$ \\ $\mathbf{\Omega}_{16} = \mathbf{\Omega}_{61} = 0.45$\end{tabular}     &  \begin{tabular}{@{}l@{}}$\mathbf{\Omega}_{ii} = 1, \mathbf{\Omega}_{i,i-1} = \mathbf{\Omega}_{i-1,i} = 0.5$ \\ $\mathbf{\Omega}_{i,i-2} = \mathbf{\Omega}_{i-2,i} = 0.25$\end{tabular}    &  $\mathbf{X} \sim \mathcal{N}(\mathbf{0}, \mathbf{\Omega}^{-1})$     \\

        \vspace{0.1cm}
        \textsc{Case--3}    &  \begin{tabular}{@{}l@{}}$\mathbf{A}_{ii} = 1, \mathbf{A}_{i,i-1} = \mathbf{A}_{i-1,i} = 0.5$ \\ $\mathbf{A}_{i,i-2} = \mathbf{A}_{i-2,i} = 0.25$\end{tabular}     &   \begin{tabular}{@{}l@{}l@{}}$\mathbf{A}_{ii} = 1, \mathbf{A}_{i,i-1} = \mathbf{A}_{i-1,i} = 0.4$ \\ $\mathbf{A}_{i,i-2} = \mathbf{A}_{i-2,i} = 0.2$ \\ $\mathbf{A}_{i,i-3} = \mathbf{A}_{i-3,i} = 0.2$ \end{tabular}     &  $x_i \sim \textrm{Laplace}(0, i)$ \\

        \vspace{0.1cm}
        \textsc{Case--4}    &  same as \textsc{Case--3}   &   same as \textsc{Case--3}     &   $x_i \sim \textrm{Logistic}(0, i)$    \\

        \vspace{0.1cm}
        \textsc{Case--5}    &  same as \textsc{Case--3}   &   same as \textsc{Case--3}     &    $x_i \sim\ \chi^2_{i+4}$    \\

        \textsc{Case--6}    &  same as \textsc{Case--3}   &   same as \textsc{Case--3}     &    $x_i \sim \textrm{StudentT}(i+4)$    \\
        
        \bottomrule
        \label{tab:synth_scenarios}
    \end{tabular}
    \end{threeparttable}
    \vspace{-0.3in}
\end{table*}    

\begin{table}[htbp]
  \centering
  \caption{AUROC comparing our approach (MEC) to the multi-dimensional KS test in \cite{hagen2021accelerated} (ddKS). 
  }
   \begin{tabular}{l cc| cc}
        \toprule
        & \multicolumn{2}{c|}{$\mathbf{M=25}$} & \multicolumn{2}{c}{$\mathbf{M=50}$} \\
        
        & MEC & \textup{ddKS} &  MEC & \textup{ddKS}  \\
        \midrule
        \textsc{Case--1} & $\mathbf{0.957}$ &  $0.824$ &  $\mathbf{0.985}$   &   $0.939$    \\
        \textsc{Case--2} & $\mathbf{0.999}$ &  $0.987$ &  $\mathbf{1.0}$   &   $\mathbf{1.0}$    \\
        \textsc{Case--3} & $\mathbf{0.980}$ &  $0.553$ &  $\mathbf{0.994}$   &  $0.582$     \\
        \textsc{Case--4} & $\mathbf{0.984}$ &  $0.564$ &  $\mathbf{0.994}$  &   $0.594$    \\
        \textsc{Case--5} & $\mathbf{0.920}$ &  $0.563$ &  $\mathbf{0.944}$   &   $0.591$    \\
        \textsc{Case--6} & $\mathbf{0.983}$ &  $0.707$ &  $\mathbf{0.991}$   &   $0.888$    \\
        
        \bottomrule
        \label{tab:synth_results}
    \end{tabular}
     \vspace{-0.3in}
\end{table}

\section{Unknown Default Model, $P$}\label{sec:simple}


In the previous section, we have shown that our coding-based OOD detection approach works well for multivariate Gaussian and near-Gaussian distributions. However, most real-world data are far from Gaussian. In fact, the default model is not known for real-world data.
We overcome this by using a (non-linear) continuous transform so that the data is Gaussian in the transformed or the latent space. In this paper, we used generative neural networks to transform arbitrary data to multivariate Gaussian. Our requirements are that 1) the transformation, $\mathbf{z} = f(\mathbf{x})$, from data the space, $\mathbf{x} \in \mathbb{R}^n$, to latent space, $\mathbf{z} \in \mathbb{R}^m$, is invertible. This means that there is a function $g$ such that $\mathbf{x} = g(\mathbf{z}) = f^{-1}(\mathbf{z})$; 2) the distribution in the latent space can be specified (usually as a multivariate Gaussian).


For the transformation, we considered Glow \cite{kingma2018glow}, a flow-based generative network.
Glow is exactly invertible. Our method to solve OOD detection problem is described in Algorithm \ref{alg:unknownglow}. Since the default distribution is unknown, training data is used to 1) train the Glow network (i.e., learn $g(\cdot)$ and $f(\cdot)$), 2) learn the multivariate distribution of the latent representation. Theoretically, this distribution can be specified a priori. However, in practice, we found that it is safer to estimate the distribution from data.

The disadvantage of Glow is that the latent space has to be the same dimension as the data space ($m = n$). This poses a limitation on our approach as it requires a number of matrix inversions of not necessarily well-conditioned matrices. We suggested downsampled the data before training Glow.

\begin{algorithm}[h!]\caption{\textsc{Glow algorithm}}
   \label{alg:unknownglow}
\begin{algorithmic}[1]
\State {\textit{Learn Default Model $P$}}
\State {\bfseries Input:} IID training data $\mathbf{x}^N$
\State Learn functions $f(\cdot)$ and $g(\cdot$) by training the neural network on $\mathbf{x}^N$. We used Glow; other invertible generative models can also be used
\State Estimate $\mathbf{\Sigma}_{train}$, the covariance matrix of $\mathbf{x}^N$ from the model giving the shortest codelength in Algorithm~\ref{alg:mulgau}
\State {\bfseries Output:} $f(\cdot)$ and $g(\cdot$) of the neural network, learned default model $P = \mathcal{N}(\mathbf{0}, \mathbf{\Sigma}_{train})$
\newline
\State {\textit{OOD Detection}}
\State {\bfseries Input:} Outputs from Training, IID test data $\mathbf{x}^M, \tau$
\State Find the latent representation $\{\mathbf{z}_i = f(\mathbf{x}_i)\}$

\State Compute  $L = -\sum_{i = 1}^M \log(P(\mathbf{z}_i))$, where $P$ is the learned default model

\State Compute $\widehat{L}$ by coding $\mathbf{z}^M$
with universal multivariate Gaussian coder and Gamma coder described in Section~\ref{sec:coding} and \ref{sec:GammaCoding}

\State {\bfseries Output:} Data $\mathbf{x}^M$ is OOD if $\widehat{L} + \tau < L$
\end{algorithmic}
\end{algorithm}

\section{Experiments on Real-World Data}
\label{sec:experiment}

%
We considered the digital image dataset MNIST \cite{deng2012mnist} where we do not know the default model distribution $P$ for the data. Instead, we have a set of training data $\mathbf{x}^N$.
We took the training data from the MNIST dataset and considered three sets of experiments for OOD detection:
\begin{itemize}
\item \textbf{Experiment 1:} Detect if a test set is from MNIST or fashion MNIST \cite{xiao2017}.
\item \textbf{Experiment 2:} Detect if a test set is from MNIST or non-MNIST \cite{bulatov2011notmnist}.
\item \textbf{Experiment 3:} Detect if a test set is from MNIST or synthetically-perturbed MNIST (see Table~\ref{tab:scenarios} and Figure~\ref{fig:Cases} for the description and visualization of the different dataset).
\end{itemize}

\begin{table}[htbp]
  \centering
  \caption{Scenarios for synthetically perturbing MNIST images. Rotation and shearing values are in degree, width and height shift in fraction, zoom and brightness in range.}
   \begin{tabular}{lc}
        \toprule
        & Perturbation type, Value \\
        \hline
        \textsc{Case--1} & Rotation, $5$ \\
        \textsc{Case--2}  & Shearing, $20$\\
        \textsc{Case--3} & $[$ Width shift , Height shift$]$, $[0.02,0.02]$\\
        \textsc{Case--4} &Zooming, $[0.8,1.2]$ \\
        \textsc{Case--5} &Zooming, $[1,1.1]$ \\
        \textsc{Case--6} &Zooming, $[0.9,1]$ \\
        \textsc{Case--7} &Brightness, $[0.2,2]$ \\
        \textsc{Case--8} &Brightness, $[0.2,1]$ \\
        \textsc{Case--9} &Gaussian noise, $\mu = 0, \sigma = 0.05$ \\
        \bottomrule
        \label{tab:scenarios}
    \end{tabular}
\end{table}


The training data consists of $60,000$ black and white images: $\{\mathbf{x} \in \mathbb{R}^{28 \times 28}\}$. In order to use Algorithm~\ref{alg:unknownglow} in the
current implementation, we had to downsample the image from $28\times28$ to $8\times 8$ pixels. This is because Glow can not do dimensionality reduction, and our approach needs to compute $\boldsymbol{\Sigma}^{-1}$. Inversion of high-dimensional matrices entails precision issues that can create invalid results; we are working on approximate matrix inversion to address this issue.

We solve OOD detection problem on test datasets of size $M$. We repeated the experiment $1000$ times; in 500 experiments, test data is from the same dataset as the training data), and in 500 experiments, test data is from a different dataset than the training data.
It should be mentioned that we can not compare to the ddKS method used in Section~\ref{sec:synthetic} because the data is too high-dimensional for the KS test.





\begin{figure}[h]
\centering
\hspace{-0.3in}
{\begin{tabular}{ccc}
 \subfloat[MNIST]{\includegraphics[width=0.9in]{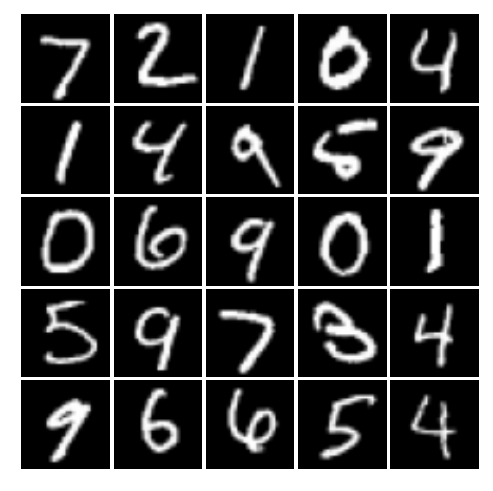}} &
 \subfloat[Fashion]{\includegraphics[width=1in]{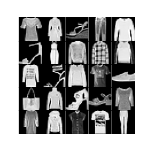}}
 &
 \subfloat[not-MNIST]{\includegraphics[width=1in]{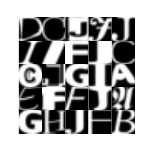}}\\
\subfloat[\textsc{Case--1}]{\includegraphics[width=0.9in]{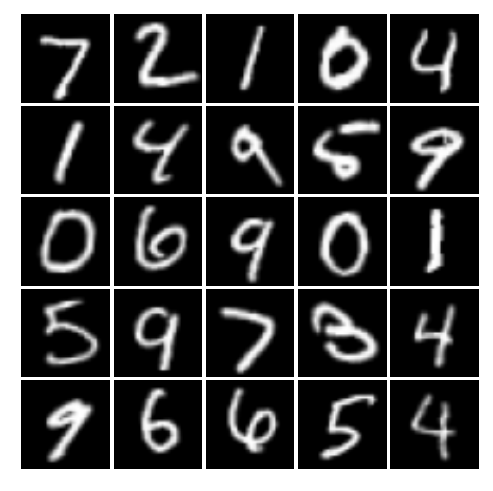}} & 
\subfloat[\textsc{ Case--2}]{\includegraphics[width=0.9in]{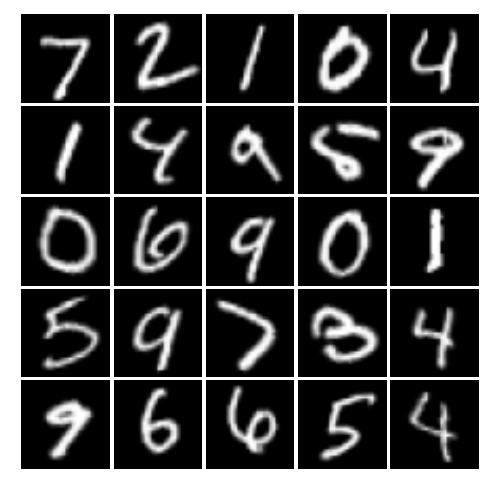}} & 
\subfloat[\textsc{Case--3}]{\includegraphics[width=0.9in]{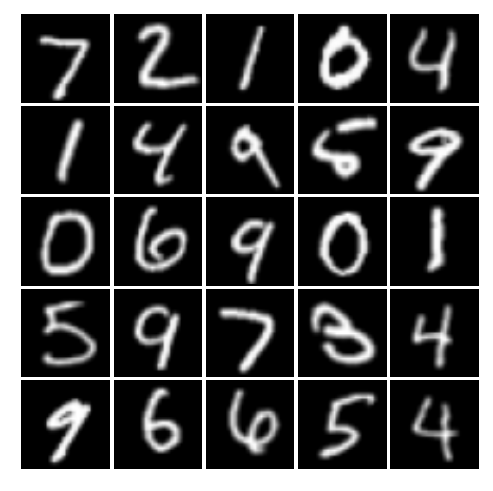}} \\
\subfloat[\textsc{Case--4}]{\includegraphics[width=0.9in]{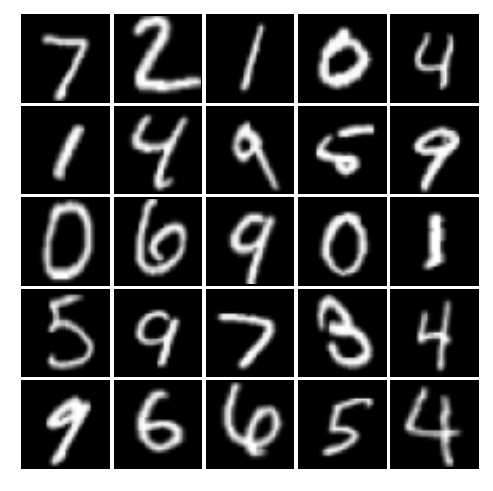}} & 
\subfloat[\textsc{Case--5}]{\includegraphics[width=0.9in]{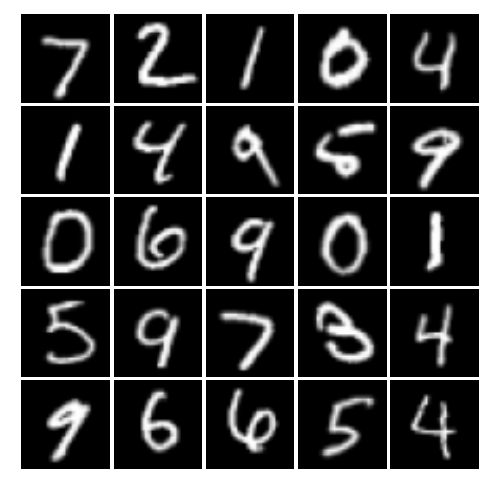}} &
\subfloat[\textsc{Case--6}]{\includegraphics[width=0.9in]{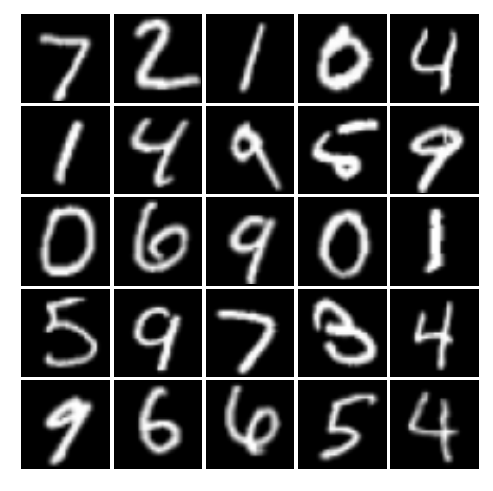}} \\
\subfloat[\textsc{Case--7}]{\includegraphics[width=0.9in]{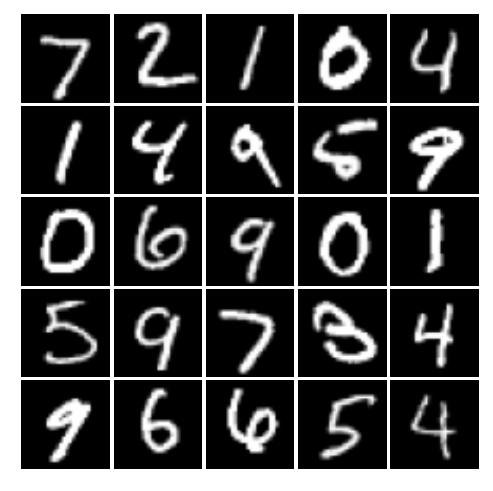}}&
\subfloat[\textsc{Case--8}]{\includegraphics[width=0.9in]{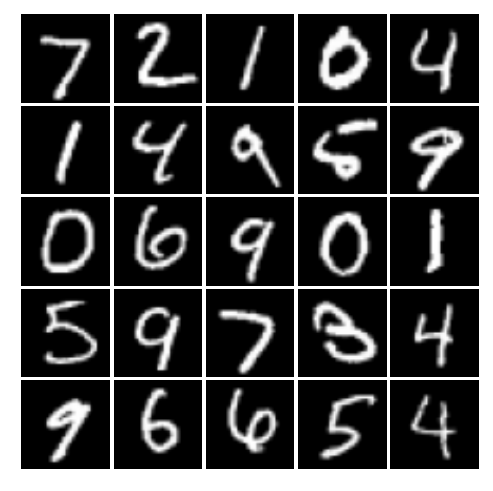}} &
\subfloat[\textsc{Case--9}]{\includegraphics[width=0.9in]{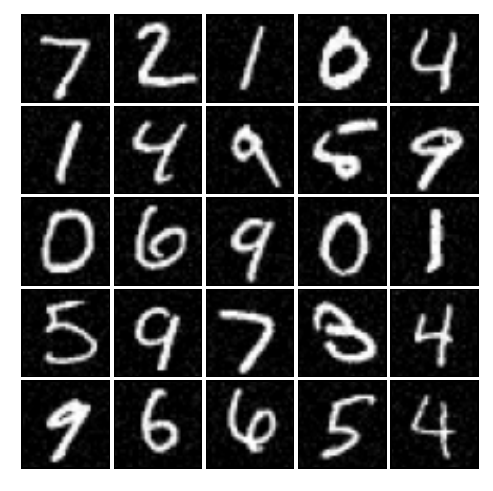}} \\
\end{tabular}}
\caption{Different datasets used in MNIST experiments. Note that the synthetically-perturbed images look very similar to the original.}
\label{fig:Cases}
\end{figure}

We compared our approach to another Glow-based method called Typicality \cite{NalisnickAl19}. We trained our model with the same hyperparameters and settings as \cite{NalisnickAl19}. For a fair comparison, we trained and tested Typicality with both downsampled $8 \times 8$ images.
Table~\ref{tab:results2} shows the AUROC for the experiments using different test set sizes $M$. Our method has higher performance than Typicality method in all cases except $\textsc{Case--5}$. In particular, in \textsc{Case--8}, Typicality gives an AUROC less than $0.5$ (i.e., random guessing). This is because (as noted in the introduction) a likelihood-based approach, which does not learn alternative distribution, may encounter situations where the OOD test data has a high likelihood under the learned default model.

\begin{table*}[hbpt!]
\centering
\caption{AUROC for MNIST  experiments comparing our MEC to Typicality \cite{NalisnickAl19} trained and tested on downsampled images. The best value for each case is boldfaced.}
\begin{tabular}{l cc| cc | cc }
\toprule
 & \multicolumn{2}{c}{$\mathbf{M=50}$} & \multicolumn{2}{c}{$\mathbf{M=100}$} & \multicolumn{2}{c}{$\mathbf{M=200}$}  \\
 & MEC  & Typicality  & MEC & Typicality  &  MEC & Typicality\\ 
\midrule
fashion MNIST  &  $\mathbf{1.000}$   & $\mathbf{1.000}$  &   $\mathbf{1.000}$ & $\mathbf{1.000}$ &   $\mathbf{1.000}$ &  $\mathbf{1.000}$  \\
not-MNIST  &  $\mathbf{1.000}$   & $\mathbf{1.000}$  &  $\mathbf{1.000}$ & $\mathbf{1.000}$  &    $\mathbf{1.000}$  &$\mathbf{1.000}$  \\
\textsc{Case--1}  &  $\mathbf{1.000}$   & $0.995$  &   $\mathbf{1.000}$  & $\mathbf{1.000}$ &  $\mathbf{1.000}$ &  $\mathbf{1.000}$ \\
\textsc{Case--2}  &  $\mathbf{1.000}$   &  $0.998$  &   $\mathbf{1.000}$  & $\mathbf{1.000}$ &  $\mathbf{1.000}$  & $\mathbf{1.000}$ \\
\textsc{Case--3}  &  $\mathbf{1.000}$   &  $\mathbf{1.000}$  &   $\mathbf{1.000}$  & $\mathbf{1.000}$ & $\mathbf{1.000}$ &  $\mathbf{1.000}$ \\
\textsc{Case--4}  &  $\mathbf{1.000}$   &  $0.502$  &   $\mathbf{1.000}$  & $0.505$ & $\mathbf{1.000}$  & $0.510$ \\
\textsc{Case--5}  &  $0.788$   &  $\mathbf{0.944}$   &   $0.855$  & $\mathbf{0.974}$ &  $0.911$  & $\mathbf{0.980}$ \\
\textsc{Case--6}  &  $\mathbf{1.000}$   &  $\mathbf{1.000}$   &   $\mathbf{1.000}$  & $\mathbf{1.000}$  &  $\mathbf{1.000}$ & $\mathbf{1.000}$ \\
\textsc{Case--7}  &  $\mathbf{0.978}$   &  $0.788$  &   $\mathbf{0.985}$  & $0.849$ &  $\mathbf{0.980}$ &  $0.911$ \\
\textsc{Case--8}  &  $\mathbf{0.883}$   &  $0.430$  &    $\mathbf{0.928}$  & $0.380$  &   $\mathbf{0.949}$ &   $0.315$ \\
\textsc{Case--9}  &  $\mathbf{1.000}$   &   $\mathbf{1.000}$ &   $\mathbf{1.000}$  & $\mathbf{1.000}$ &  $\mathbf{1.000}$ & $\mathbf{1.000}$ \\

\bottomrule
\end{tabular}%
\label{tab:results2}
\end{table*}

\section{Conclusion, Limitations, and
Future work}\label{sec:Conclusion}
We developed a method for OOD detection based on MDL for the case when the default distribution $P$ is known and when $P$ is unknown. In terms of application, the latter case is more likely to occur. We showed with experiments that our approach outperforms KS test for multivariate Gaussian and near-Gaussian OOD detection problems.  We tested our approach on MNIST dataset and compared with another OOD detection method called Typicality. Our approach has higher performance in most of cases. It also does not give AUROC less than 0.5.

In this paper, we restricted the class of potential data models to only multivariate Gaussian distributions. Since  the neural network transforms the default model into Gaussian, this is still powerful enough to detect subtle changes. However, if the test data is from distributions very different from the default distribution, it is unlikely that the transformed distribution will be close to Gaussian. The advantage of using MDL is that we can add any other model/coder into the framework as long as we account for complexity through MDL.
In the future, we plan to expand our method with mixture of Gaussians and an extension of the histogram approach in one dimension to higher dimensions.


We also have numerical challenges with our current implementation because our method needs to compute covariance matrix $\mathbf{\Sigma}$ and precision matrix $\mathbf{\Sigma}^{-1}$. High-dimensional data means that these matrices are large and precision issue means that the inversion results may not be valid covariance/precision matrices. 
We will investigate fast, approximate matrix inversion algorithms (e.g., \cite{benzi2000orderings}) to solve this problem.

\bibliographystyle{plainnat}
\bibliography{Coop06,ahmref2,Coop03,BigData,reference,refs}

\end{document}